\newcommand{\argmax}{\mathop{\rm argmax}}
\newcommand{\argmin}{\mathop{\rm argmin}}
\newcommand{\degree}{\mathrm{deg}}
\newcommand{\rd}{\, \mathrm{d}}
\newcommand{\hide}[1]{}
\newtheorem{lemma}{Lemma}
\newtheorem{theorem}{Theorem}
\newtheorem{corollary}{Corollary}
\newtheorem{problem}{Problem}
\newenvironment{newprocedure}[1][ht]{%
   \begin{algorithm}[#1]%
  }{\end{algorithm}}
\begin{document}

\title{Densest Diverse Subgraphs:\\ How to Plan a Successful Cocktail Party with Diversity}

\author{Atsushi Miyauchi}
\email{atsushi.miyauchi@centai.eu}
\orcid{0000-0002-6033-6433}
\affiliation{%
  \institution{CENTAI Institute}
  \city{Turin}
  \country{Italy}
}

\author{Tianyi Chen}
\email{ctony@bu.edu}
\affiliation{%
  \institution{Boston University}
  \city{Boston}
  \state{MA}
  \country{USA}
}

\author{Konstantinos Sotiropoulos}
\email{ksotirop@bu.edu}
\affiliation{%
  \institution{Boston University}
  \city{Boston}
  \state{MA}
  \country{USA}
}

\author{Charalampos E. Tsourakakis}
\email{ctsourak@bu.edu}
\affiliation{%
  \institution{Boston University}
  \city{Boston}
  \state{MA}
  \country{USA}
}

\renewcommand{\shortauthors}{Miyauchi et al.}

\begin{abstract}
Dense subgraph discovery methods are routinely used in a variety of applications including the identification of a team of skilled individuals for collaboration from a social network. 
However, when the network's node set is associated with a sensitive attribute such as race, gender, religion, or political opinion, the lack of diversity can lead to lawsuits. 

In this work, we focus on the problem of finding a densest diverse subgraph in a graph whose nodes have different attribute values/types that we refer to as colors. We propose two novel formulations motivated by different realistic scenarios. Our first formulation, called the \emph{densest diverse subgraph problem} (DDSP), guarantees that no color represents more than some fraction of the nodes in the output subgraph, which generalizes the state-of-the-art due to Anagnostopoulos et al. (CIKM~2020). By varying the fraction we can range the diversity constraint and interpolate from a diverse dense subgraph where all colors have to be equally represented to an unconstrained dense subgraph.  We design a scalable $\Omega(1/\sqrt{n})$-approximation algorithm, where $n$ is the number of nodes. Our second formulation is motivated by the setting where any specified color should not be overlooked. We propose the \emph{densest at-least-$\vec{k}$-subgraph problem} (Dal$\vec{k}$S), a novel generalization of the classic Dal$k$S, where instead of a single value $k$, we have a vector $\bm{k}$ of cardinality demands with one coordinate per color class. We design a $1/3$-approximation algorithm using linear programming together with an acceleration technique. Computational experiments using synthetic and real-world datasets demonstrate that our proposed algorithms are effective in extracting dense diverse clusters.

\end{abstract}

\begin{CCSXML}
<ccs2012>
<concept>
<concept_id>10003752.10003809.10003635</concept_id>
<concept_desc>Theory of computation~Graph algorithms analysis</concept_desc>
<concept_significance>500</concept_significance>
</concept>
<concept>
<concept_id>10003752.10003809.10003636</concept_id>
<concept_desc>Theory of computation~Approximation algorithms analysis</concept_desc>
<concept_significance>500</concept_significance>
</concept>
</ccs2012>
\end{CCSXML}

\ccsdesc[500]{Theory of computation~Graph algorithms analysis}
\ccsdesc[500]{Theory of computation~Approximation algorithms analysis}

\keywords{social network analysis, densest subgraph problem, diversity, fairness, approximation algorithms}



\maketitle

\section{Introduction}
\label{sec:intro} 
Dense subgraph discovery (DSD) is a fundamental graph-mining primitive, routinely used to mine social, financial, and biological networks among others~\cite{Gionis_Tsourakakis_15}. Applications include team formation~\cite{gajewar2012multi,rangapuram2013towards}, detecting correlated genes~\cite{tsourakakis2013denser}, community and spam link farm detection in the Web graph~\cite{Gibson+05,Dourisboure+07}, finding experts  in crowdsourcing systems~\cite{Kawase+19}, spotting money laundering in financial networks~\cite{Li2020FlowScopeSM,starnini2021smurf,chen2022antibenford}, assessing the statistical significance of motifs~\cite{chen2023algorithmic}, and modeling real-world networks~\cite{chanpuriya2021power}. See the tutorial~\cite{Gionis_Tsourakakis_15} for an extensive list of related applications. 

Among various DSD formulations, the densest subgraph problem (DSP) stands out for various reasons~\cite{Lanciano+23}. For a given undirected graph $G=(V,E)$ with $n=|V|$ nodes and $m=|E|$ edges, DSP is solvable in polynomial time using maximum flow~\cite{Goldberg84,Picard82} or linear programming (LP)~\cite{Charikar00} and can be approximated within a factor of 2 using a greedy algorithm~\cite{Kortsarz+94,Charikar00}, 
and more recently solved near-optimally by an iterative greedy algorithm over the input~\cite{chekuri2022densest,boob2020flowless}. 
Furthermore, DSP is solvable on massive graphs using distributed and streaming implementations~\cite{Bahmani+12} 
and admits useful variants, e.g.,~\cite{Kawase_Miyauchi_18,Kuroki+20,Tsourakakis_15,Veldt+21}. 

In numerous real-world settings, we have extra information about the nodes. For example, in a graph database we may know each individual's gender and race. On the Twitter follow network, there exist methods to infer from tweets whether a node is positive, neutral, or negative towards a controversial topic~\cite{de2016learning}.  On brain networks neurons can play different functional roles~\cite{adami2011information,matejek2022edge}. 
Consider also the problem of organizing a cocktail party~\cite{sozio2010community} with diversity constraints as follows:

\begin{quote}
A number of computer scientists aim to organize a cocktail party to celebrate Turing's legacy. They believe that the success of the event will be higher if they invite computer scientists who have collaborated in the past  but also who span different research areas. Whom should they invite?
\end{quote} 

We will refer to the set of different attribute values/types of nodes as \emph{colors}. The aforementioned   settings motivate the problem of finding a \emph{densest diverse subgraph}, namely a subset of nodes that induce many edges, but also are diverse in terms of colors. We will be referring to the concept of diversity as \emph{fairness} when the attribute concerns sensitive information such as gender, race, or religion. Applying a DSD method does not guarantee that the extracted densest subgraph will be diverse. Actually on a variety of real data, we observe that this is the typical case, i.e., the densest subgraph often exhibits strong homophily. Suppose the output of such a method is used to select a group of individuals in a social network. In that case, it will not be representative of the different races/religions/opinions that may co-exist in the network. This can be especially harmful in the context of selecting teams using dense subgraphs~\cite{rangapuram2013towards}, recommending material to social media users that is not balanced in terms of opinions and hence increasing polarization~\cite{musco2018minimizing} or even leading to lawsuits~\cite{minisi}. In such cases, it becomes of paramount importance to have algorithms that can extract a cluster with diversity guarantees.   

With the exception of a recent paper by Anagnostopoulos et al.~\cite{anagnostopoulos2020spectral}, very little attention has been given to dense diverse subgraph extraction, despite the extensive research on the DSP and its applications~\cite{Lanciano+23}. Although Anagnostopoulos et al.~\cite{anagnostopoulos2020spectral} have made progress, there are still many unanswered questions. For example, the methods they propose exclusively focus on scenarios involving two colors and strive to achieve the complete fairness in the output. Furthermore, their spectral approach offers theoretical guarantees, but these guarantees are contingent upon restrictive conditions for the spectrum of the input graph that are computationally burdensome to verify. Additionally, they offer heuristics for scenarios involving more than two colors, but without any assurances regarding their quality.

\subsection{Our Contributions}
We introduce two novel formulations for finding a densest diverse subgraph. The first one is called the \emph{densest diverse subgraph problem} (DDSP), and the second one is called the \emph{densest at-least-$\vec{k}$-subgraph problem} (Dal$\vec{k}$S).  Informally, the first problem aims to offer diversity guarantees that concern the relative sizes of the color classes, while the second guarantees in terms of absolute counts. 

Let $C$ be the set of colors. Our first formulation guarantees the diversity of the output in the sense that  it is not dominated by any single color. To this end, the formulation introduces a parameter $\alpha \in [1/|C|,1]$ that determines the maximum portion of any color in the output solution. It should be noted that our formulation is a substantial generalization of the \emph{fair densest subgraph problem} introduced by Anagnostopoulos et al.~\cite{anagnostopoulos2020spectral}, enabling us to deal with non-binary attributes and to freely adjust the degree of diversity.  
Interestingly, from a theory perspective, our formulation contains two important variants of DSP, the densest at-least-$k$-subgraph problem (Dal$k$S) and the densest at-most-$k$-subgraph problem (Dam$k$S), as special cases, cf.  Sections~\ref{sec:related} and~\ref{subsec:dds} for details. 

On the other hand, our second formulation guarantees the diversity of the output in the sense that it does not overlook any specified color. In particular, consider a graph where $|C|$ is a very small constant and the minority colors appear only in a handful of nodes. Instead of imposing {\em relative} constraints on the sizes through ratios, we impose {\em absolute} constraints on the cardinalities of the nodes from each color.  Specifically, the formulation requires the output to contain at least a given number of representatives from each color. The formulation is a novel generalization of Dal$k$S, where instead of just demanding $k$ nodes in the output, we have a vector $\bm{k}$ of demands from each color, i.e., they are lower bounds on how many nodes we have to include from each possible color.   

As both formulations are NP-hard, we design polynomial-time approximation algorithms: 
For the first problem, we provide an approximation algorithm for the case where $V$ is already diverse (i.e., $V$ is a feasible solution for the problem). 
Our algorithm has an approximation ratio of $\gamma \cdot \max\left\{ \frac{1}{\lceil 1/\alpha \rceil},\, \frac{1}{\alpha n}\right\}$, where $\gamma$ is the best approximation ratio known for Dal$k$S (currently equal to $1/2$~\cite{Khuller_Saha_09}). 
By simple calculation, we observe that the above approximation ratio leads to an approximation ratio of $\Omega(1/\sqrt{n})$, irrespective of any parameter other than the number of nodes. 
Moreover, we can also see that the approximation ratio is lower bounded by $1/|C|$, meaning that the algorithm attains a constant-factor approximation for the case of $|C|=O(1)$ and is a generalization of the $1/2$-approximation algorithm for the fair densest subgraph problem by Anagnostopoulos et al.~\cite{anagnostopoulos2020spectral}. 
Our algorithm is based on an approximation algorithm for Dal$k$S with a carefully selected value of $k$, together with a simple postprocessing. The primary factor determining the time complexity of our algorithm is the time complexity of the approximation algorithm used for Dal$k$S.

For the second problem, we devise a $1/3$-approximation algorithm, 
which runs in polynomial time for constant $|C|$.  In the design and analysis of our algorithm, we generalize the existing $1/2$-approximation algorithm for Dal$k$S and its approximation ratio analysis. 
As shown later, we can get a $1/4$-approximate solution directly using the $1/2$-approximation algorithm for Dal$k$S. 
Our effort improves the approximation ratio from $1/4$ to $1/3$, by sacrificing some degree of scalability.   We also present an acceleration technique for the proposed algorithm with the aid of the well-known greedy peeling algorithm~\cite{Charikar00}. The running time of our original algorithm is $O((n/|C|)^{|C|}T_\text{LP})$, where $T_\text{LP}$ is the time required to solve an LP used in the algorithm, while the accelerated version runs in $O(|C|(n/|C|)^{|C|-1}T_\text{LP})$. In the case where $|C|$ is a  small constant, the reduction of the running time due to the acceleration is drastic. 

We evaluate our algorithms on real-life attributed datasets, including social networks with gender and profession attributes. We compare against Anagnostopoulos et al.~\cite{anagnostopoulos2020spectral}, but we also develop a novel baseline that uses node embeddings~\cite{perozzi2014deepwalk,qiu2018network,chanpuriya2020node} combined with advances in scalable fair clustering of points~\cite{backurs2019scalable}. The algorithms we propose have the capability to extract dense and diverse subgraphs. We demonstrate that real-world networks contain dense subgraphs that exhibit significant homophily, emphasizing the importance of employing our tools in scenarios where diversity or fairness is essential.

\section{Related work}
\label{sec:related} 
\noindent \textbf{DSP and its variations.} Given an undirected graph $G=(V,E)$, we define for any non-empty subset of nodes $S\subseteq V$  the (degree) density $d(S)=|E(S)|/|S|$, where $E(S)=\{\{u,v\}\in E\mid u,v\in S\}$. The DSP aims to maximize the degree density over all possible subsets. Notice that the degree density is just half of the average degree of an induced subgraph. The DSP is polynomial-time solvable using  $O(\log n)$ max-flow computations~\cite{Goldberg84,Picard82}, $O(1)$ number of flows using parametric maximum flow~\cite{gallo1989fast}, or LP~\cite{Charikar00}. There also exists a linear-time $1/2$-approximation algorithm that greedily removes a node of the smallest degree, and reports the maximum degree density seen among these subsets~\cite{Kortsarz_Nutov_05,Charikar00}. 
This kind of algorithm is often called the greedy peeling algorithm. 
Recently, Boob et al.~\cite{boob2020flowless} proposed {\sc Greedy++}, an iterative peeling algorithm that generalizes the above and converges to a near-optimal solution extremely fast without the use of flows. Very recently, Chekuri et al. \cite{chekuri2022densest} analyzed the performance of {\sc Greedy++}, and showed that its algorithmic idea could be generalized to general fractional supermodular maximization.  

DSP has a lot of problem variants~\cite{Lanciano+23}. 
Unlike the original DSP, its size-restricted variants are known to be NP-hard. Indeed, the densest $k$-subgraph problem (D$k$S) that asks for the densest subgraph with exactly $k$ nodes, is not only NP-hard, but also hard to approximate, with the best-known approximation ratio being
$\Omega(1/n^{1/4+\epsilon})$ for any $\epsilon>0$~\cite{Bhaskara+10}. This approximability result is far off from the best-known hardness result that assumes the Exponential Time Hypothesis (ETH). If ETH holds, then D$k$S cannot be approximated within a ratio of $n^{1/(\log\log n)^c}$ for some $c>0$.  
Another size-restricted variant, Dam$k$S, aims to maximize the degree density over all subsets of nodes $S$ such that $|S|\leq k$~\cite{Andersen_Chellapilla_09}. 
It is known that $\alpha$-approximation to Dam$k$S leads to $\alpha/4$-approximation to D$k$S~\cite{Khuller_Saha_09}. 

Closest to this work lies Dal$k$S, which imposes the cardinality constraint $|S|\geq k$~\cite{Andersen_Chellapilla_09}.  The problem is also known to be NP-hard~\cite{Khuller_Saha_09}. Andersen and Chellapilla~\cite{Andersen_Chellapilla_09} designed a $1/3$-approximation algorithm that runs in linear time, using greedy peeling. 
Khuller and Saha~\cite{Khuller_Saha_09} designed two different approximation algorithms, that both achieve $1/2$-approximation using either a small number of flows, or by solving an LP~\cite{Khuller_Saha_09}. 
 
\smallskip
\noindent \textbf{Fair densest subgraph problem.}  Despite the large amount of research on DSD, the problem of finding a densest diverse subgraph has not received attention with the single exception of Anagnostopoulos et al.~\cite{anagnostopoulos2020spectral} who introduced the fair densest subgraph problem for two colors. 
Assuming the graph is fair to begin with, i.e., the two colors are equally represented in $V$, they demand equal representation of each category in the output. 
For this case, they proposed a greedy $1/2$-approximation algorithm and a spectral approach. The spectral approach comes with guarantees only in limited cases (e.g., all degrees being almost equal), which are not typical on real data that tend to have a skewed degree distribution. This algorithm can be extended to the case of $|C|>2$, but without quality guarantees. 
Finally, the authors studied the hardness of the problem and showed that their formulation is at least as hard as Dam$k$S: Any $\alpha$-approximation to their formulation leads to $\alpha$-approximation to Dam$k$S. 

\smallskip
\noindent \textbf{Fairness and algorithms.} While DSD with diversity is not yet well studied, fair clustering of clouds of points has received much attention from the data mining community. Chierichetti et al.~\cite{chierichetti2017fair} initiated the problem of finding balanced clusters in a cloud of points, namely clusters where two groups are equally represented. They proposed a method called fairlet decomposition, that decomposes a dataset into minimal sets that satisfy fair representation, called fairlets. Afterwards, typical machine learning algorithms, like $k$-median, can be used to obtain fair clusters. Backurs et al.~ \cite{backurs2019scalable}  provided a scalable algorithm for the fairlet decomposition. Later work has extended the problem of finding fair clusters to the case of correlation clustering~\cite{ahmadian2020fair_cor} and hierarchical clustering ~\cite{ahmadian2020fair_hier}.

\section{Problem statements} 
\label{sec:statements}
In this section, we formally introduce our optimization problems. 
Let $G=(V,E)$ be an undirected graph with $n=|V|$ nodes and $m=|E|$ edges. Let $C$ be a set of colors. Without loss of generality $|C|\leq n$. Let $\ell \colon V\rightarrow C$ be the coloring function that assigns a color to each node.  Given the above as input, we aim to find a densest diverse subgraph. We mathematically formalize the notion of diversity in two ways found in Sections~\ref{subsec:dds} and~\ref{subsec:dalks}, respectively.

\subsection{Densest Diverse Subgraph Problem (DDSP)}
\label{subsec:dds}

Our first notion aims to ensure that no single color dominates the rest. Specifically, for $S\subseteq V$, we denote by $c_\text{max}(S)$ the maximum number of nodes in $S$  that receive the same color, i.e., $c_\mathrm{max}(S)=\max_{c\in C} |\{v\in S\mid \ell(v)=c\}|$. 
We also denote by $\alpha(S)$ the maximum fraction of monochromatic nodes in $S$, i.e., $\alpha(S)=c_\mathrm{max}(S)/|S|$. Our problem can be formulated as follows:  

\begin{problem}[DDSP]
\label{prob:fair}
Given an undirected graph $G=(V,E)$, $\ell \colon V\rightarrow C$, and $\alpha \in [1/|C|,1]$,  find a subset of nodes $S\subseteq V$ that maximizes the degree density $d(S)$ subject to the constraint $\alpha(S)\leq \alpha$. 
\end{problem}

This problem is a major generalization of the fair densest subgraph problem introduced by Anagnostopoulos et al.~\cite{anagnostopoulos2020spectral}, which is obtained for the special values $|C|=2$ and $\alpha=1/2$. 
As the fair densest subgraph problem is NP-hard~\cite{anagnostopoulos2020spectral}, Problem~\ref{prob:fair} is also NP-hard. Clearly when $\alpha=1$ we are oblivious to diversity and obtain (polynomial-time solvable) DSP.  
More interestingly, Problem~\ref{prob:fair} contains two totally different optimization problems, Dal$k$S and Dam$k$S, as special cases: 
\begin{proposition}\label{prop:reduction} 
There exist polynomial-time reductions from Dal$k$S and Dam$k$S to Problem~\ref{prob:fair}.
\end{proposition}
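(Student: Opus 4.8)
The plan is to exhibit the two reductions independently, in each case choosing the coloring and the parameter $\alpha$ so that the diversity constraint $\alpha(S)\le \alpha$ collapses exactly onto the cardinality constraint of the target problem, while the objective $d(S)$ is preserved up to a global constant. Both reductions keep (a subgraph isomorphic to) the input graph intact, so the density values carry over cleanly.

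First I would handle Dal$k$S. Given an instance $(G,k)$ with $1\le k\le n$, I would build the DDSP instance on the \emph{same} graph $G$, assign every node its own color (so $|C|=n$), and set $\alpha=1/k$. Since all colors are distinct, $c_\mathrm{max}(S)=1$ for every nonempty $S$, hence $\alpha(S)=1/|S|$, and the constraint $\alpha(S)\le 1/k$ is equivalent to $|S|\ge k$. As $\alpha=1/k\in[1/n,1]$ is a valid parameter and the objective $d(S)$ is unchanged, the feasible regions and objectives of the two problems coincide, so an optimal (or $\gamma$-approximate) DDSP solution is optimal (resp.\ $\gamma$-approximate) for Dal$k$S. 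This direction is essentially immediate.

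Next I would reduce Dam$k$S to Problem~\ref{prob:fair}, reusing the idea behind the known reduction from Dam$k$S to the fair densest subgraph problem of Anagnostopoulos et al.~\cite{anagnostopoulos2020spectral}. Given $(G,k)$, I would use two colors: color $1$ for all original nodes of $V$, and a set $D$ of exactly $k$ \emph{isolated} dummy nodes of color $2$; then I set $\alpha=1/2$. Writing $S=S_V\cup S_D$ with $S_V\subseteq V$ and $S_D\subseteq D$, we have $c_\mathrm{max}(S)=\max\{|S_V|,|S_D|\}$, so $\alpha(S)\le 1/2$ forces $|S_V|=|S_D|$. Because the dummy nodes carry no edges, $d(S)=|E(S_V)|/(|S_V|+|S_D|)=\tfrac12\,d(S_V)$, and the supply of dummies caps the common value at $|S_V|=|S_D|\le k$. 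Hence maximizing $d(S)$ over feasible $S$ is equivalent, up to the global factor $\tfrac12$, to maximizing $d(S_V)$ over $S_V\subseteq V$ with $|S_V|\le k$, which is exactly Dam$k$S; the factor $\tfrac12$ cancels in ratios, so optimal values and approximation ratios transfer.

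The routine parts are the two equivalences above; the point that needs care is the Dam$k$S direction. There I must argue that an optimal DDSP solution never pads itself with superfluous dummies: every feasible $S$ satisfies $|S_D|=|S_V|$ exactly, so the dummy count is pinned by $|S_V|$ and the only remaining freedom is the choice of $S_V$, which is precisely the Dam$k$S degree of freedom; consequently $\mathrm{OPT}_{\text{DDSP}}=\tfrac12\,\mathrm{OPT}_{\text{Dam}k\text{S}}$. I also need to verify the boundary conditions, namely that $|D|=k$ correctly enforces $|S_V|\le k$, that a nonempty feasible solution exists (e.g.\ one original node paired with one dummy), and that $\alpha=1/2\in[1/|C|,1]$ with $|C|=2$; all of these hold by construction. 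No other step presents difficulty.
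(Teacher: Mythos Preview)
Your proposal is correct and follows essentially the same approach as the paper: for Dal$k$S you give each node a distinct color and set $\alpha=1/k$, and for Dam$k$S you add $k$ isolated dummy nodes of a second color and set $\alpha=1/2$. Your write-up is in fact more careful than the paper's, spelling out why the two-color balance constraint forces $|S_V|=|S_D|$ and hence $|S_V|\le k$, and why the global factor $1/2$ in the objective is harmless.
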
 

\begin{proof} 
The reductions are obtained by appropriately setting the number of colors and the parameter $\alpha$.  For Dal$k$S, it suffices to construct the instance of Problem~\ref{prob:fair} by coloring the nodes with $n$ distinct colors and setting $\alpha$ to be $1/k$. For Dam$k$S, it suffices to construct the instance of Problem~\ref{prob:fair} by coloring the nodes with the single color and adding $k$ isolated nodes (i.e., dummy nodes) with another color, and setting $\alpha=1/2$. 
\end{proof}

\subsection{Densest At-Least-$\vec{k}$-Subgraph (Dal$\vec{k}$S)}
\label{subsec:dalks}

Our second formulation diversifies the output by ensuring that it does not overlook any specified color. 
To this end, the formulation requires the output to contain at least a given number of representatives from each color. 
For $S\subseteq V$ and $c\in C$, let $S_c=\{v\in S\mid \ell(v)=c\}$. 
Our problem formulation is as follows: 
\begin{problem}[{\rm Dal$\vec{k}$S}]\label{prob:dalkks}
Given an undirected graph $G=(V,E)$, $\ell \colon V\rightarrow C$, and $\bm{k}=(k_c)_{c\in C}\in \mathbb{Z}_{\geq 0}^{|C|}$, find a subset of nodes $S\subseteq V$ that maximizes the degree density $d(S)$ subject to $|S_c|\geq k_c$ for any $c\in C$. 
\end{problem}

Obviously the above problem is a generalization of Dal$k$S. 
Thus, the problem is NP-hard. 
If $\bm{k}=\bm{0}$, the problem is reduced to the original DSP; 
therefore, throughout the paper, we assume that $k_c\geq 1$ for some $c\in C$. 
As mentioned in the introduction, we can easily get a $1/4$-approximation algorithm for the problem: 

\begin{proposition}\label{prop:baseline}
For Problem~\ref{prob:dalkks}, there exists a polynomial-time $1/4$-approximation algorithm. 
\end{proposition}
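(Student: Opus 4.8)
The plan is to reduce Dal$\vec{k}$S to the single-demand problem Dal$k$S and invoke the known $1/2$-approximation algorithm of Khuller and Saha~\cite{Khuller_Saha_09}, paying an extra factor of $2$ for a repair step that restores per-color feasibility. Throughout I would assume the instance is feasible, i.e.,\ $|\{v \in V \mid \ell(v)=c\}| \geq k_c$ for every $c \in C$ (otherwise Problem~\ref{prob:dalkks} has no feasible solution at all).

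First I would set $K = \sum_{c \in C} k_c$ and observe that every feasible solution $S$ of Problem~\ref{prob:dalkks} satisfies $|S| = \sum_{c \in C} |S_c| \geq \sum_{c \in C} k_c = K$, hence is feasible for Dal$k$S with the single cardinality demand $K$. Consequently, writing $\mathrm{OPT}$ for the optimal value of Problem~\ref{prob:dalkks}, the optimal value of this Dal$k$S instance is at least $\mathrm{OPT}$. Running the $1/2$-approximation algorithm for Dal$k$S therefore returns a set $S'$ with $|S'| \geq K$ and $d(S') \geq \tfrac{1}{2}\mathrm{OPT}$, although $S'$ need not meet the individual color demands.

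The repair step is the crux. For each color $c$ with $|S'_c| < k_c$, I would add $k_c - |S'_c|$ arbitrary nodes of color $c$ (which exist by feasibility), obtaining a set $S''$. By construction $S''$ is feasible for Problem~\ref{prob:dalkks}. Two monotonicity facts then finish the argument: adding nodes never deletes edges, so $|E(S'')| \geq |E(S')|$; and the number of added nodes is at most $\sum_{c \in C} k_c = K \leq |S'|$, so that $|S''| \leq |S'| + K \leq 2|S'|$. Combining these,
\[
d(S'') = \frac{|E(S'')|}{|S''|} \geq \frac{|E(S')|}{2|S'|} = \frac{1}{2}\, d(S') \geq \frac{1}{4}\,\mathrm{OPT},
\]
which establishes the claimed ratio.

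The only place requiring care is bounding the size increase: the guarantee $|S'| \geq K$ inherited from the Dal$k$S solution is exactly what lets me charge the at-most-$K$ newly added nodes against the nodes already present in $S'$, capping the density loss at a factor of $2$. Everything else---feasibility of the reduction and the edge-monotonicity of node addition---is routine, so I do not anticipate a genuine obstacle; the proposition follows once these observations are assembled.
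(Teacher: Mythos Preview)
Your proposal is correct and follows essentially the same argument as the paper: reduce to Dal$k$S with $k=\sum_{c\in C}k_c$, apply the Khuller--Saha $1/2$-approximation, then repair per-color feasibility by adding at most $k$ nodes, using $|S'|\geq k$ to bound the density loss by a factor of~$2$. The paper's proof is the same in both structure and detail.
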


\begin{proof}
Let $G = (V, E)$, $\ell \colon V \rightarrow C$, $\bm{k}=(k_c)_{c \in C}\in \mathbb{Z}_{\geq 0}^{|C|}$ be an instance of Problem 2, and $\text{OPT}$ the optimal value of the instance. To get a feasible solution for Problem 2, we have to take at least $k_c$ nodes for every color $c \in C$; therefore, Dal$k$S with $k = \|\bm{k}\|_1=\sum_{c\in C}k_c$ on $G$ is a relaxation of Problem 2 on $G$. Let $S \subseteq V$ be an $\alpha$-approximate solution for Dal$k$S with $k = \|\bm{k}\|_1$. As Dal$k$S with $k = \|\bm{k}\|_1$ is a relaxation of Problem 2, we have $d(S) \geq \alpha \cdot \text{OPT}$. Note that $S$ is not necessarily feasible for Problem 2, but we can make it feasible by adding at most $k_c$ nodes for every color $c \in C$, resulting in adding at most $\|\bm{k}\|_1 \leq |S|$ nodes in total. Letting $S’ \subseteq V$ be the resulting subset, we have $d(S’) = \frac{|E(S’)|}{|S’|} \geq \frac{|E(S)|}{|S’|} \geq \frac{1}{2}\cdot d(S) \geq \frac{\alpha}{2}\cdot \text{OPT}$, meaning that $S’$ is an $\alpha/2$-approximate solution for Problem 2.
As mentioned above, there is a polynomial-time $1/2$-approximation algorithm for Dal$k$S~\cite{Khuller_Saha_09}. 
Therefore, we can set $\alpha=1/2$ and have a polynomial-time $1/4$-approximation algorithm for Problem~\ref{prob:dalkks}.
\end{proof}

\section{Algorithm for Problem~\ref{prob:fair}}
\label{sec:fair}

In this section, we design a polynomial-time $\Omega(1/\sqrt{n})$-approximation algorithm for Problem~\ref{prob:fair}. 
A subset of nodes $S\subseteq V$ is said to be diverse if $\alpha(S)\leq \alpha$ holds. 
In what follows, we  assume that $V$ is diverse, i.e., $\alpha(V) \leq \alpha$.

\subsection{The Proposed Algorithm}\label{subsec:algo_approx}
Our algorithm first computes a constant-factor approximate solution (say $\gamma$-approximate solution) to Dal$k$S 
on $G$ with $k=\lceil 1/\alpha \rceil$. 
For example, we can use a $1/2$-approximation algorithm using LP~\cite{Khuller_Saha_09} or 
a $1/3$-approximation algorithm using greedy peeling~\cite{Andersen_Chellapilla_09}. 
Then, the algorithm makes the solution feasible by adding an arbitrary node with a color of the lowest participation iteratively (Procedure~\ref{alg:diversify}). 
For reference, the entire procedure of our algorithm is summarized in Algorithm~\ref{alg:approx}. 

The time complexity of Algorithm~\ref{alg:approx} is dominated by the algorithm we use for Dal$k$S. 
Even if we consider the objective value in Procedure~\ref{alg:diversify}, it still depends on the choice of the approximation algorithm for Dal$k$S. 
If we employ a $1/2$-approximation algorithm using LP, the time complexity of Algorithm~\ref{alg:approx} is dominated by that required for solving the LP. 
If we use a $1/3$-approximation algorithm using greedy peeling, 
Algorithm~\ref{alg:approx} can be implemented to run in $O(m+n\log n)$ time, by handling the nodes outside $S$ using a Fibonacci heap for each color with key values being degrees to $S$. 

\begin{newprocedure}[t]
\caption{$\mathsf{Diversify}(S)$}\label{alg:diversify}
\SetKwInOut{Input}{Input}
\SetKwInOut{Output}{Output}
\While{$\alpha(S)>\alpha$}{
Find $v_\mathrm{min}\in V\setminus S$ that satisfies $\ell(v_\mathrm{min})\in \argmin_{c\in C}|S_c|$\;
\tcc{In practice consider also the objective value.}
$S\leftarrow S\cup \{v_\mathrm{min}\}$\;
}
\Return $S$\;
\end{newprocedure}

\begin{algorithm}[t]
\caption{$\Omega(1/\sqrt{n})$-approximation algorithm}\label{alg:approx}
\SetKwInOut{Input}{Input}
\SetKwInOut{Output}{Output}
\Input{\ $G=(V,E)$, $\ell \colon V\rightarrow C$, $\alpha \in [1/|C|,1]$}
\Output{\ $S\subseteq V$}
$S\leftarrow$  $\gamma$-approximate solution to Dal$k$S on $G$ with $k=\lceil 1/\alpha \rceil$\;
\tcc{See \cite{Khuller_Saha_09} and \cite{Andersen_Chellapilla_09} for the algorithms achieving $\gamma=1/2$ and $\gamma=1/3$, respectively.}
\Return $\mathsf{Diversify}(S)$\;
\end{algorithm}

\subsection{Analysis}
We analyze the approximation ratio of Algorithm~\ref{alg:approx}. We have the following key lemma: 

\begin{lemma}\label{lem:size_ub}
Assume that $\alpha(V)\leq \alpha$ holds. Then, for any $S\subseteq V$ with $|S|\geq \lceil 1/\alpha \rceil$, it holds that $|\mathsf{Diversify}(S)|\leq \min\{\lceil 1/\alpha \rceil, \alpha n\}\cdot |S|$. 
\end{lemma}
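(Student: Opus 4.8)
The plan is to prove the two upper bounds $|\mathsf{Diversify}(S)| \le \lceil 1/\alpha\rceil\,|S|$ and $|\mathsf{Diversify}(S)| \le \alpha n\,|S|$ separately, since the quantity claimed is their minimum. The second bound is immediate and needs no structural information: the output is a subset of $V$, so $|\mathsf{Diversify}(S)| \le n$, while the hypothesis $|S| \ge \lceil 1/\alpha\rceil \ge 1/\alpha$ gives $\alpha|S| \ge 1$ and hence $\alpha n\,|S| \ge n$. All the work therefore goes into the first bound.

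Write $S' := \mathsf{Diversify}(S)$ and $M := c_{\max}(S)$. The heart of the argument is a structural invariant that I would establish by induction over the iterations of the while loop: throughout the execution of $\mathsf{Diversify}$ the maximum color count never changes, i.e.\ it stays equal to $M$. For the inductive step, suppose the loop condition $\alpha(S) > \alpha$ holds and currently $c_{\max}(S) = M$. Then $M/|S| > \alpha \ge 1/|C|$, so $|S| < M|C|$; on the other hand $|S| = \sum_{c \in C}|S_c| \ge |C|\cdot m$ where $m = \min_{c\in C}|S_c|$, and combining the two forces $m < M$. Hence the node added in this iteration belongs to a color of count $m < M$, which after incrementing is still $\le M$, so $c_{\max}$ remains $M$. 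It can never exceed $M$ either, since reaching count $M+1$ would require incrementing a color already at the maximum value $M$, which is never a minimizer while the loop runs. Thus $c_{\max}(S') = M$.

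With the invariant in hand I would read off the size bound from the stopping rule. If no node is ever added then $S' = S$ and the bound is trivial, so assume at least one addition and let $S''$ denote the set immediately before the last one. Since the loop did not stop at $S''$ we have $\alpha(S'') > \alpha$, and because $c_{\max}(S'') = M$ this reads $M/|S''| > \alpha$, i.e.\ $|S''| < M/\alpha$. As $|S''|$ is an integer this yields $|S''| \le \lceil M/\alpha\rceil - 1$, whence $|S'| = |S''| + 1 \le \lceil M/\alpha\rceil$. Finally $\lceil M/\alpha\rceil \le \lceil 1/\alpha\rceil M$ because the right-hand side is an integer dominating $M/\alpha$, and $M = c_{\max}(S) \le |S|$; together these give $|S'| \le \lceil 1/\alpha\rceil\,|S|$, as required.

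The main obstacle is the structural invariant $c_{\max} \equiv M$; everything after it is bookkeeping with ceilings. The delicate point is combining the hypothesis $\alpha \ge 1/|C|$ with $\sum_{c\in C}|S_c| = |S|$ to guarantee that, while the set is still infeasible, the minimum color count is strictly below the maximum, so that $\mathsf{Diversify}$ never touches a color already attaining the maximum. I would also verify well-definedness along the way, namely that a node $v_{\min}\in V\setminus S$ of a minimum-count color always exists while the loop runs, which follows from the standing assumption $\alpha(V)\le\alpha$ since $V$ itself is a feasible superset of every intermediate $S$.
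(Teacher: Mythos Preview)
Your proof is correct and shares the paper's central ingredient, the invariant that $c_{\max}$ stays equal to $M$ throughout $\mathsf{Diversify}$; you establish it by a direct pigeonhole ($|C|\cdot m \le |S| < M|C|$ forces $m<M$) where the paper argues by contradiction on the color fractions summing past~$1$. For the bound $\lceil 1/\alpha\rceil\,|S|$ both proofs then read off a size estimate from the invariant together with the stopping rule. Where you genuinely diverge is the second bound $\alpha n\,|S|$: the paper again invokes the invariant, showing the loop halts once $|S''|=c_{\max}(S)\,|S|$ and then chaining $c_{\max}(S)\le c_{\max}(V)\le \alpha n$. Your observation that this half is \emph{trivial}---$|\mathsf{Diversify}(S)|\le n$ while $|S|\ge 1/\alpha$ gives $\alpha n\,|S|\ge n$---bypasses the invariant entirely and isolates the one place the structural fact is really needed. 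The paper's route extracts two consequences from one lemma; yours is shorter and makes clear that only the $\lceil 1/\alpha\rceil$ factor requires any work.

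One small caveat: your well-definedness sketch (``$V$ is a feasible superset, hence a minimum-count node is available'') only yields $V\setminus S\neq\emptyset$; it does not rule out the unique argmin color being already exhausted in $V$. For instance, with $|C|=3$, $|V_A|=1$, $|V_B|=|V_C|=10$, $\alpha=1/2$, and $S$ consisting of the $A$-node, two $B$-nodes, and all ten $C$-nodes, one has $\alpha(S)=10/13>\alpha$ yet $\argmin_c|S_c|=\{A\}$ with $V_A\setminus S_A=\emptyset$. The paper's proof does not address this point either, so it is not a gap relative to the reference argument.
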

\begin{proof}
We first prove that $c_\mathrm{max}(S)=c_\mathrm{max}(\mathsf{Diversify}(S))$ holds. 
As $S\subseteq \mathsf{Diversify}(S)$, we have $c_\mathrm{max}(S)\leq c_\mathrm{max}(\mathsf{Diversify}(S))$. 
Therefore, it suffices to show $c_\mathrm{max}(S)\geq c_\mathrm{max}(\mathsf{Diversify}(S))$. 
For $S \subseteq V$ and $c \in C$, we denote by $f(S,c)$ the fraction of the nodes in $S$ that receive the color $c$, i.e., $f(S,c) = |S_c|/|S|$. Let us focus on an arbitrary iteration of Procedure~\ref{alg:diversify} and let $S’ \subseteq V$ be the subset kept at the beginning of the iteration. Then there exists $c \in C$ that satisfies the condition $f(S’,c) < \alpha$. Suppose, for contradiction, that there exist no such colors. Then, for any color $c \in C$, we have $f(S’,c) \geq \alpha$. Moreover, as $S’$ is not yet feasible, we see that there exists $c’ \in C$ that satisfies $f(S’,c’) > \alpha$. Therefore, we have 
\begin{align*}
1 = \sum_{c\in C} f(S’,c) &= f(S’,c’) + \sum_{c \in C \setminus \{c’\}} f(S’,c)\\ 
  &> \alpha + (|C| - 1)\alpha = |C|\alpha \geq 1,
\end{align*}
a contradiction. 
From the above, recalling the greedy rule of Procedure~\ref{alg:diversify}, we see that Procedure~\ref{alg:diversify} only adds the nodes with the colors $c \in C$ that satisfy $f(S’,c) < \alpha$. To increase $c_\text{max}$ in this iteration, Procedure~\ref{alg:diversify} needs to add a node with a color $c \in C$ that satisfies $f(S’,c) > \alpha$, but it does not happen. As we fixed an iteration arbitrarily, $c_\text{max}$ does not increase throughout Procedure~\ref{alg:diversify}. Therefore, we have $c_\text{max}(S) \geq c_\text{max}(\textsf{Diversify}(S))$. 

Assume that in some iteration, which produces $S''\subseteq V$, of Procedure~\ref{alg:diversify}, $|S''|= \lceil |S|/\alpha \rceil$ holds. 
Then we have 
\begin{align*}
\alpha(S'')=\frac{c_\mathrm{max}(S'')}{|S''|}
= \frac{c_\mathrm{max}(S)}{\lceil |S|/\alpha \rceil}
\leq \alpha \cdot \frac{c_\mathrm{max}(S)}{|S|}\leq \alpha, 
\end{align*}
where the second equality follows from $c_\mathrm{max}(S)= c_\mathrm{max}(\mathsf{Diversify}(S))$. 
This means that the algorithm terminates at or before this iteration. 
Therefore, we have $|\mathsf{Diversify}(S)|\leq \lceil |S|/\alpha \rceil$. 

Similarly, assume that in some iteration, which produces $S''\subseteq V$, of Procedure~\ref{alg:diversify}, $|S''|= c_\mathrm{max}(S)|S|$ holds. Then we have 
\begin{align*}
\alpha(S'')=\frac{c_\mathrm{max}(S'')}{|S''|}
=\frac{c_\mathrm{max}(S)}{c_\mathrm{max}(S)|S|}
=\frac{1}{|S|}
\leq \frac{1}{\lceil 1/\alpha \rceil}
\leq \alpha, 
\end{align*}
where the first inequality follows from the assumption of the lemma. 
This means that the algorithm terminates at or before this iteration. 
Therefore, we have $|\mathsf{Diversify}(S)|\leq c_\mathrm{max}(S)|S|$. 

From the above, we see that 
\begin{align*}
|\mathsf{Diversify}(S)|&\leq \min\{\lceil|S|/\alpha \rceil, c_\mathrm{max}(S)|S|\}\\
&\leq \min\{\lceil 1/\alpha \rceil, c_\mathrm{max}(V)\}\cdot |S|\\
&\leq \min\{\lceil 1/\alpha \rceil, \alpha n\}\cdot |S|, 
\end{align*}
which completes the proof. 
\end{proof}

Using the above lemma, we can prove the following: 

\begin{theorem} \label{thm:alg2_appx}
Algorithm~\ref{alg:approx} is a $\left(\gamma \cdot \max\left\{\frac{1}{\lceil 1/\alpha \rceil},\, \frac{1}{\alpha n}\right\}\right)$-approximation algorithm for Problem~\ref{prob:fair} 
when $\alpha(V)\leq \alpha$ holds. Here $\gamma$ is the approximation ratio of the algorithm for Dal$k$S used in Algorithm~\ref{alg:approx}. 
\end{theorem}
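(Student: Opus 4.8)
The plan is to bound the density of the output $\mathsf{Diversify}(S)$ from below by relating it to the density of the $\gamma$-approximate Dal$k$S solution $S$, and then to relate $d(S)$ to the optimal value $\text{OPT}$ of Problem~\ref{prob:fair}. Let $S^\star$ denote an optimal diverse subgraph, so $d(S^\star)=\text{OPT}$ and $\alpha(S^\star)\le\alpha$. The first observation I would establish is that $S^\star$ is a feasible solution for Dal$k$S with $k=\lceil 1/\alpha\rceil$: since $\alpha(S^\star)\le\alpha$ means every color occupies at most an $\alpha$-fraction of $S^\star$, no single color can dominate, which forces $|S^\star|\ge \lceil 1/\alpha\rceil$ (otherwise some color would exceed the $\alpha$ threshold by the pigeonhole counting already used in Lemma~\ref{lem:size_ub}). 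Hence $\text{OPT}=d(S^\star)$ is at most the optimal Dal$k$S value on $G$ with this $k$, so $d(S)\ge\gamma\cdot\text{OPT}$ because $S$ is a $\gamma$-approximate Dal$k$S solution.

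The second step is to control the density loss incurred by $\mathsf{Diversify}$. Since Dal$k$S with $k=\lceil 1/\alpha\rceil$ returns a set with $|S|\ge\lceil 1/\alpha\rceil$, Lemma~\ref{lem:size_ub} applies directly and gives $|\mathsf{Diversify}(S)|\le \min\{\lceil 1/\alpha\rceil,\,\alpha n\}\cdot|S|$. Writing $T=\mathsf{Diversify}(S)$, I would note that $\mathsf{Diversify}$ only adds nodes to $S$, so no edges are lost: $|E(T)|\ge|E(S)|$. Therefore
\begin{align*}
d(T)=\frac{|E(T)|}{|T|}\ge\frac{|E(S)|}{|T|}\ge\frac{|E(S)|}{\min\{\lceil 1/\alpha\rceil,\,\alpha n\}\cdot|S|}=\frac{d(S)}{\min\{\lceil 1/\alpha\rceil,\,\alpha n\}}.
\end{align*}

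Combining the two steps yields
\begin{align*}
d(T)\ge\frac{\gamma\cdot\text{OPT}}{\min\{\lceil 1/\alpha\rceil,\,\alpha n\}}=\gamma\cdot\max\left\{\frac{1}{\lceil 1/\alpha\rceil},\,\frac{1}{\alpha n}\right\}\cdot\text{OPT},
\end{align*}
and since $T$ is feasible for Problem~\ref{prob:fair} by construction (the while-loop terminates exactly when $\alpha(T)\le\alpha$), this establishes the claimed approximation ratio.

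The main obstacle I anticipate is the first step, namely proving cleanly that $\text{OPT}$ is bounded above by the optimal Dal$k$S value, which hinges on the implication $\alpha(S^\star)\le\alpha\Rightarrow|S^\star|\ge\lceil 1/\alpha\rceil$. The subtlety is that $|S^\star|\ge 1/\alpha$ is immediate from $c_\text{max}(S^\star)\ge 1$ and $\alpha(S^\star)=c_\text{max}(S^\star)/|S^\star|\le\alpha$, but I must take the ceiling carefully: since $|S^\star|$ is an integer and $|S^\star|\ge 1/\alpha$, I get $|S^\star|\ge\lceil 1/\alpha\rceil$. Everything else is routine monotonicity of edge counts and the already-proven Lemma~\ref{lem:size_ub}; the only remaining care is to confirm that feasibility of the Dal$k$S relaxation is genuinely a relaxation of Problem~\ref{prob:fair} in the sense that every diverse set of size at least $\lceil 1/\alpha\rceil$ is Dal$k$S-feasible, so that the optimal Dal$k$S value dominates $\text{OPT}$.
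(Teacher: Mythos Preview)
Your proof is correct and follows essentially the same approach as the paper's: first argue that Dal$k$S with $k=\lceil 1/\alpha\rceil$ is a relaxation of Problem~\ref{prob:fair} (so $d(S)\ge\gamma\cdot\text{OPT}$), then invoke Lemma~\ref{lem:size_ub} together with the monotonicity $|E(\mathsf{Diversify}(S))|\ge|E(S)|$ to bound the density loss. Your treatment of the relaxation step is actually slightly more explicit than the paper's---you spell out $c_{\max}(S^\star)\ge 1\Rightarrow |S^\star|\ge 1/\alpha\Rightarrow |S^\star|\ge\lceil 1/\alpha\rceil$ via integrality, whereas the paper just asserts in one sentence that at least $\lceil 1/\alpha\rceil$ nodes are needed to satisfy the diversity constraint.
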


\begin{proof}
Let $\mathsf{OPT}$ be the optimal value of Problem~\ref{prob:fair}. 
Let $S\subseteq V$ be a constant-factor approximate solution (say $\gamma$-approximate solution) for Dal$k$S on $G$ with $k=\lceil 1/\alpha \rceil$. 
Note that Dal$k$S with $k=\lceil 1/\alpha \rceil$ is a relaxation of Problem~\ref{prob:fair}, because even if we pick the nodes, all of which have different colors, we need at least $k=\lceil 1/\alpha \rceil$ nodes to satisfy the diversity constraint. Thus, we have $d(S)=\frac{|E(S)|}{|S|}\geq \gamma \cdot \mathsf{OPT}$. The output of the algorithm is $\mathsf{Diversify}(S)$ whose objective value can be evaluated as follows: 
\begin{align*}
d(\mathsf{Diversify}(S))
&=\frac{|E(\mathsf{Diversify}(S))|}{|\mathsf{Diversify}(S)|}\\
&\geq \frac{|E(S)|}{\min\{\lceil 1/\alpha \rceil, \alpha n\}\cdot |S|}\\
&\geq \gamma \cdot \max\left\{\frac{1}{\lceil 1/\alpha \rceil},\, \frac{1}{\alpha n}\right\} \cdot \mathsf{OPT}, 
\end{align*}
where the first inequality follows from Lemma~\ref{lem:size_ub}. 
\end{proof}

It should be emphasized that the approximation ratio of Algorithm~\ref{alg:approx} is lower bounded by $1/|C|$, meaning that the algorithm is a constant-factor approximation algorithm for the case of $|C|=O(1)$. 
In fact, we see that $\frac{1}{\lceil 1/\alpha \rceil}\geq \frac{1}{\lceil|C|\rceil}=\frac{1}{|C|}$. 
Therefore, Algorithm~\ref{alg:approx} can be seen as a generalization of the $1/2$-approximation algorithm for the fair densest subgraph problem by Anagnostopoulos et al.~\cite{anagnostopoulos2020spectral}. 

Moreover, the above analysis leads to an approximation ratio that is independent of any parameter other than $n$: 

\begin{corollary}
Algorithm~\ref{alg:approx} is an $\Omega(1/\sqrt{n})$-approximation algorithm for Problem~\ref{prob:fair} when $\alpha(V)\leq \alpha$ holds. 
\end{corollary}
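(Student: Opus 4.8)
The plan is to derive the corollary directly from Theorem~\ref{thm:alg2_appx}, which guarantees an approximation ratio of $\gamma \cdot \max\left\{\frac{1}{\lceil 1/\alpha \rceil},\, \frac{1}{\alpha n}\right\}$. Since $\gamma$ is a fixed constant (at least $1/3$ for the Dal$k$S algorithms cited in Algorithm~\ref{alg:approx}), it suffices to show that the maximum term is bounded below by $\Omega(1/\sqrt{n})$ uniformly over every admissible $\alpha \in [1/|C|,1]$. The whole argument is thus an elementary estimate of this maximum; there is no deep obstacle, only the need to handle the two regimes simultaneously rather than separately.

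First I would replace the ceiling by a clean bound. Because $\alpha \leq 1$ forces $1/\alpha \geq 1$, we get $\lceil 1/\alpha \rceil \leq 1/\alpha + 1 \leq 2/\alpha$, and therefore $\frac{1}{\lceil 1/\alpha \rceil} \geq \frac{\alpha}{2}$. It then remains to lower bound $\max\left\{\frac{\alpha}{2},\, \frac{1}{\alpha n}\right\}$, which is cleaner because its two arguments now depend on $\alpha$ through a simple reciprocal pair.

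The key step is the elementary inequality $\max\{a,b\} \geq \sqrt{ab}$, valid for all $a,b \geq 0$ since $\max\{a,b\}^2 \geq ab$. Applying it with $a = \alpha/2$ and $b = 1/(\alpha n)$, the factor $\alpha$ cancels and I obtain
\[
\max\left\{\frac{\alpha}{2},\, \frac{1}{\alpha n}\right\} \geq \sqrt{\frac{\alpha}{2}\cdot\frac{1}{\alpha n}} = \frac{1}{\sqrt{2n}}.
\]
This cancellation is exactly what makes the estimate independent of $\alpha$: as $\alpha$ decreases, the relaxation term $\frac{1}{\lceil 1/\alpha \rceil}$ degrades while the blow-up term $\frac{1}{\alpha n}$ improves, and the two effects balance out around $\alpha \approx \sqrt{2/n}$.

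Combining the pieces, the approximation ratio of Algorithm~\ref{alg:approx} is at least $\gamma/\sqrt{2n} = \Omega(1/\sqrt{n})$, which proves the claim. The one point worth emphasizing—and the only place where a naive argument could go wrong—is that I would bound the maximum by the \emph{geometric mean} of its two arguments rather than optimizing each term by itself; this is what delivers a single lower bound holding for all $\alpha$ at once, which is precisely the parameter-free guarantee the corollary asserts.
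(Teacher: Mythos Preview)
Your proof is correct and follows essentially the same approach as the paper: both apply the geometric-mean inequality $\max\{a,b\}\geq \sqrt{ab}$ to the two terms in Theorem~\ref{thm:alg2_appx} so that the dependence on $\alpha$ cancels. The only difference is cosmetic—you first bound $\lceil 1/\alpha\rceil \leq 2/\alpha$ before taking the geometric mean, whereas the paper applies the geometric mean directly to $\frac{1}{\lceil 1/\alpha\rceil}$ and $\frac{1}{\alpha n}$ and absorbs the ceiling into the $\Omega(\cdot)$ notation.
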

\begin{proof}
We can lower bound the approximation ratio given in Theorem~\ref{thm:alg2_appx} as follows:  
\begin{align*}
\gamma \cdot \max\left\{\frac{1}{\lceil 1/\alpha \rceil},\, \frac{1}{\alpha n}\right\}
\geq \gamma \cdot \sqrt{\frac{1}{\lceil 1/\alpha \rceil}\cdot \frac{1}{\alpha n}}=\Omega\left(\frac{1}{\sqrt{n}}\right).  
\end{align*}
\end{proof}

\section{Algorithm for Problem~\ref{prob:dalkks}}
\label{sec:alg_prob2}

In this section, we design a polynomial-time $1/3$-approximation algorithm for Problem~\ref{prob:dalkks} with $|C|=O(1)$.  
Using a sophisticated LP, we can improve the  approximation ratio of $1/4$ given in Proposition~\ref{prop:baseline}. Recall that $\bm{k}=(k_c)_{c\in C}\in \mathbb{Z}_{\geq 0}^{|C|}$ is the input lower-bound vector.

\subsection{The Proposed Algorithm}
Let $\bm{p}=(p_c)_{c\in C}\in \mathbb{Z}_{\geq 0}^{|C|}$ be a vector that satisfies $\bm{p} \geq \bm{k}$.  
We consider the following LP: 
\begin{alignat*}{4}
&\mathrm{LP}(\bm{p})\colon \ \ &\mathrm{maximize}&\ \ &   &\sum_{e\in E}x_e\\
&                       &\mathrm{subject\ to}&\ \ &   &x_e\leq y_u, \ x_e\leq y_v &\quad &\forall e=\{u,v\}\in E,\\
&&&&&\sum_{v\in V_c}y_v = \frac{p_c}{\|\bm{p}\|_1} &&\forall c\in C,\\
&&&&& y_v \leq \frac{1}{\|\bm{p}\|_1}&  &\forall v\in V,\\ 
&&&&&x_e, y_v \geq 0 &  &\forall e\in E, \, \forall v\in V.
\end{alignat*}
Note that this is a major generalization of the LP used in the $1/2$-approximation algorithm for Dal$k$S~\cite{Khuller_Saha_09}. 
The LP for Dal$k$S is parameterized by a single value, while our LP is parameterized by multiple values (i.e., a vector $\bm{p}$). 
This modification is essential to address the generalization caused by Dal$\vec{k}$S. 

Let $(\bm{x}^*,\bm{y}^*)$ be an optimal solution to $\mathrm{LP}(\bm{p})$ for $\bm{p}$. 
For $(\bm{x}^*,\bm{y}^*)$ and $r\geq 0$, we define $S(r)=\{v\in V\mid y^*_v\geq r\}$. 
Now we are ready to present our algorithm. 
For each $\bm{p}\in \mathbb{Z}_{\geq 0}^{|C|}$ such that $\bm{k}\leq \bm{p}\leq (|V_c|)_{c\in C}$, 
the algorithm conducts the following procedure: 
It first solves $\mathrm{LP}(\bm{p})$ to obtain an optimal solution $(\bm{x}^*,\bm{y}^*)$. 
Then using the solution, the algorithm enumerates all possible $S(r)$'s by setting $r\in \{y^*_v\mid v\in V\}\cup \{0\}$ each, makes them feasible by adding nodes with appropriate colors using Procedure~\ref{alg:dalkks_subroutine}, and takes the best subset among them, as a candidate for $\bm{p}$. 
After the above iterations, the algorithm finally outputs the best subset among all candidates for $\bm{p}$ with $\bm{k}\leq \bm{p}\leq (|V_c|)_{c\in C}$. 
The entire procedure is summarized in Algorithm~\ref{alg:dalkks}. 

\begin{newprocedure}[t]
\caption{$\textsf{Make\_it\_feasible}(S)$}\label{alg:dalkks_subroutine}
\For{each $c\in C$}{
  \If{$|S_c|<k_c$}{
    Take arbitrary $v\in V_c\setminus S_c$\;
    \tcc{In practice consider also the objective value.}
    $S\leftarrow S\cup \{v\}$\;
    }
}
\Return $S$\;
\end{newprocedure}

\begin{algorithm}[t]
\caption{$1/3$-approximation algorithm for Problem~\ref{prob:dalkks}}\label{alg:dalkks}
\SetKwInOut{Input}{Input}
\SetKwInOut{Output}{Output}
\Input{\ $G=(V,E)$, $\ell \colon V\rightarrow C$, $\bm{k}=(k_c)_{c\in C}\in \mathbb{Z}_{\geq 0}^{|C|}$}
\Output{\ $S\subseteq V$}
\For{each $\bm{p}$ such that $\bm{k}\leq \bm{p}\leq (|V_c|)_{c\in C}$}{ 
  Solve $\mathrm{LP}(\bm{p})$ to obtain an optimal solution $(\bm{x}^*,\bm{y}^*)$\;
  Construct $\textsf{Candidates}(\bm{p})\coloneqq \{\textsf{Make\_it\_feasible}(S(r))\mid r\in \{y^*_v\mid v\in V\}\cup \{0\}\}$\;
  Take $\textsf{Best}(\bm{p})\in \argmax\{d(S)\mid S\in \textsf{Candidates}(\bm{p})\}$\;
  }
\Return $\argmax\{d(S)\mid S\in \{\textsf{Best}(\bm{p})\mid \bm{k}\leq \bm{p}\leq (|V_c|)_{c\in C}\}\}$\;
\end{algorithm}

\subsection{Analysis}
We first give a lower bound on the optimal value of $\mathrm{LP}(\bm{p})$. 
\begin{lemma}
\label{lem:LP_LB}
For any $S\subseteq V$ such that $|S_c|= p_c$ for every $c\in C$, there exists a feasible solution of $\, \mathrm{LP}(\bm{p})$ whose objective function value is greater than or equal to $d(S)$. 
\end{lemma}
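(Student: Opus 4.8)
The plan is to exhibit an explicit feasible solution of $\mathrm{LP}(\bm{p})$ built directly from the indicator of $S$, in the same spirit as the integral-to-fractional construction behind the Charikar LP relaxation for the densest subgraph problem. First I would record the immediate consequence of the hypothesis: since $|S_c|=p_c$ for every $c\in C$, summing over colors gives $|S|=\sum_{c\in C}|S_c|=\sum_{c\in C}p_c=\|\bm{p}\|_1$. This identity is what lets the scaling factor $1/\|\bm{p}\|_1$ match both the color-balance right-hand sides and the density $d(S)$.

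Next I would define the candidate point by setting $y_v=1/\|\bm{p}\|_1$ for each $v\in S$ and $y_v=0$ otherwise, and $x_e=1/\|\bm{p}\|_1$ for each $e\in E(S)$ and $x_e=0$ otherwise (equivalently $x_e=\min\{y_u,y_v\}$ for $e=\{u,v\}$). Then I would check the four constraint families of $\mathrm{LP}(\bm{p})$. The edge constraints $x_e\leq y_u$ and $x_e\leq y_v$ hold because $x_e$ is nonzero only when both endpoints lie in $S$, in which case $x_e=y_u=y_v=1/\|\bm{p}\|_1$; and when $x_e=0$ the inequalities are trivial. The box constraints $y_v\leq 1/\|\bm{p}\|_1$ and the nonnegativity constraints $x_e,y_v\geq 0$ are immediate from the definition.

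The only step that genuinely invokes the cardinality hypothesis is the color-balance equality. For each $c\in C$, exactly the $|S_c|$ nodes of $S$ with color $c$ contribute to $\sum_{v\in V_c}y_v$, so $\sum_{v\in V_c}y_v=|S_c|/\|\bm{p}\|_1=p_c/\|\bm{p}\|_1$, which is precisely the required right-hand side. I would then evaluate the objective: $\sum_{e\in E}x_e=|E(S)|/\|\bm{p}\|_1=|E(S)|/|S|=d(S)$, where the last equality again uses $|S|=\|\bm{p}\|_1$ from the first step. Since the constructed point is feasible and attains objective value exactly $d(S)$, the lemma follows (indeed with equality, so ``$\geq d(S)$'' holds).

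I do not anticipate any real obstacle: the construction is the natural scaled indicator of $S$, and every verification is a one-line check. The single conceptually essential point is that the hypothesis $|S_c|=p_c$ is used in two places — to make the color-balance constraints hold with the prescribed right-hand sides, and (via $|S|=\|\bm{p}\|_1$) to identify the LP objective of the constructed point with the density $d(S)$. I would make sure to state the $|S|=\|\bm{p}\|_1$ identity up front so both uses are transparent.
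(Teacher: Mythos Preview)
Your proposal is correct and follows essentially the same approach as the paper: construct the scaled indicator solution $y_v=1/\|\bm{p}\|_1$ for $v\in S$ (and $0$ otherwise), $x_e=1/\|\bm{p}\|_1$ for $e\in E(S)$ (and $0$ otherwise), verify feasibility via the hypothesis $|S_c|=p_c$, and compute the objective as $|E(S)|/\|\bm{p}\|_1=d(S)$. Your write-up is in fact a bit more explicit than the paper's in spelling out $|S|=\|\bm{p}\|_1$ and checking all constraint families, but the argument is the same.
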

\begin{proof}
Construct a solution $(\bm{x},\bm{y})$ of $\mathrm{LP}(\bm{p})$ as follows: 
\begin{align*}
x_e=
\begin{cases}
1/\|\bm{p}\|_1 &\text{if } e\in E(S),\\
0 &\text{otherwise}, 
\end{cases}
\qquad 
y_v=
\begin{cases}
1/\|\bm{p}\|_1 &\text{if } v\in S,\\
0 &\text{otherwise}. 
\end{cases}
\end{align*}
Then we can see that $(\bm{x},\bm{y})$ is feasible for $\mathrm{LP}(\bm{p})$. 
In fact, $\sum_{v\in V_c}y_v=\sum_{v\in S_c}y_v= \frac{p_c}{\|\bm{p}\|_1}$. 
The objective function value of $(\bm{x},\bm{y})$ is 
$\sum_{e\in E}x_e=\sum_{e\in E(S)}x_e=\frac{|E(S)|}{\|\bm{p}\|_1}=d(S).$
Thus, we have the lemma. 
\end{proof}

For $S\subseteq V$, let $C_\mathrm{sat}(S)=\{c\in C\mid |S_c|\geq k_c\}$, i.e., the set of colors for which the constraint is satisfied by $S$. 
We can prove the following key lemma (see Appendix~\ref{appendix:fair_third} for the proof): 

\begin{lemma}\label{lem:dalkks_key}
Let $S^*\subseteq V$ be an optimal solution to Problem~\ref{prob:dalkks}. 
For each $c\in C$, let $p^*_c = |S^*_c|$. 
Let $(\bm{x}^*,\bm{y}^*)$ be an optimal solution to $\mathrm{LP}(\bm{p}^*)$ and $\lambda$ its objective value.
Then there exists $S\in \{S(r)\mid r\in \{y^*_v\mid v\in V\}\cup \{0\}\}$ that satisfies (exactly) one of the following: 
\begin{enumerate}
    \item $d(S)\geq \lambda/3$ and $C_\mathrm{sat}(S)=C$; 
    \item $\displaystyle \frac{|E(S)|}{\sum_{c\in C_\mathrm{sat}(S)}|S_c|+\sum_{c\in C\setminus C_\mathrm{sat}(S)}k_c}\geq \lambda/3$ and $C_\mathrm{sat}(S)\neq \emptyset \neq C\setminus C_\mathrm{sat}(S)$; 
    \item $|E(S)|\geq |E(S^*)|/3$ and $C_\mathrm{sat}(S)=\emptyset$. 
\end{enumerate}
\end{lemma}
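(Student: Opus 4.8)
The plan is to prove the statement by contradiction through a layer-cake (threshold) analysis of the optimal LP solution, in the spirit of the Khuller--Saha rounding for Dal$k$S but with a per-color accounting. Write $k=\|\bm{p}^*\|_1=|S^*|$. First I would observe that at optimality $x^*_e=\min\{y^*_u,y^*_v\}$ for every edge $e=\{u,v\}$ (otherwise $x_e$ could be increased without violating a constraint). Since every $y^*_v\in[0,1/k]$, the identities $y^*_v=\int_0^{1/k}\mathbf{1}[v\in S(r)]\rd r$ and $\min\{y^*_u,y^*_v\}=\int_0^{1/k}\mathbf{1}[e\in E(S(r))]\rd r$ give the three facts I will use: $\lambda=\int_0^{1/k}|E(S(r))|\rd r$, $\int_0^{1/k}|S(r)|\rd r=\sum_v y^*_v=1$, and $\int_0^{1/k}|S(r)_c|\rd r=p^*_c/k$ for each $c\in C$, the latter two following from the equality constraints of $\mathrm{LP}(\bm{p}^*)$ and $\sum_c p^*_c=k$. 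Since $S(r)$ takes only finitely many values as $r$ ranges over $[0,1/k]$, each equal to $S(r')$ for some $r'\in\{y^*_v:v\in V\}\cup\{0\}$, controlling the continuum suffices.

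Next I would exploit monotonicity: as $r$ increases the sets $S(r)$ shrink, so each $|S(r)_c|$ is nonincreasing and a color, once unsatisfied, stays unsatisfied; hence $C_{\mathrm{sat}}(S(r))$ is nested-decreasing and $[0,1/k]$ splits into three sub-intervals $R_1=\{r:C_{\mathrm{sat}}(S(r))=C\}$, $R_2=\{r:\emptyset\neq C_{\mathrm{sat}}(S(r))\neq C\}$, $R_3=\{r:C_{\mathrm{sat}}(S(r))=\emptyset\}$, listed in increasing order of $r$ (any may be empty; note $S(0)=V$ is feasible, so $0\in R_1$). The argument is then a proof by contradiction: assume that for \emph{every} candidate level set none of the three listed inequalities holds, which translates, for almost every $r$, into $|E(S(r))|<\tfrac{\lambda}{3}|S(r)|$ on $R_1$; $|E(S(r))|<\tfrac{\lambda}{3}\,g(r)$ on $R_2$, where $g(r)=\sum_{c\in C_{\mathrm{sat}}(S(r))}|S(r)_c|+\sum_{c\notin C_{\mathrm{sat}}(S(r))}k_c$; and $|E(S(r))|<\tfrac{1}{3}|E(S^*)|$ on $R_3$.

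I would integrate each bound and sum them. On $R_3$, using $|R_3|\le 1/k$ together with Lemma~\ref{lem:LP_LB} (which yields $|E(S^*)|/k=d(S^*)=\mathsf{OPT}\le\lambda$), I get $\int_{R_3}|E|\rd r<\tfrac{|E(S^*)|}{3k}=\tfrac{\mathsf{OPT}}{3}\le\tfrac{\lambda}{3}$. Since $g(r)=|S(r)|$ on $R_1$, the remaining two regions combine as $\int_{R_1}|E|+\int_{R_2}|E|<\tfrac{\lambda}{3}\int_{R_1\cup R_2}g(r)\rd r$. The crux is the claim $\int_{R_1\cup R_2}g(r)\rd r\le 2$, which then gives $\int_{R_1}|E|+\int_{R_2}|E|<\tfrac{2\lambda}{3}$, so that $\lambda=\int_0^{1/k}|E|<\tfrac{2\lambda}{3}+\tfrac{\lambda}{3}=\lambda$, the desired contradiction (for $\lambda=0$ the claim is trivial, e.g.\ $S(0)=V$ satisfies case~1).

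The main obstacle, and the place where the constant $3$ is paid for, is the bound $\int_{R_1\cup R_2}g(r)\rd r\le 2$. I would prove it by a per-color decomposition. Let $b=\sup(R_1\cup R_2)$ and, for each color $c$, let $\rho_c$ be the threshold below which $c$ is satisfied; since no color is satisfied on $R_3$ we have $\rho_c\le b$, so over $[0,b]$ color $c$ contributes exactly $\int_0^{\rho_c}|S(r)_c|\rd r+k_c(b-\rho_c)$. The first term is at most $\int_0^{1/k}|S(r)_c|\rd r=p^*_c/k$, and the second is at most $k_c/k$ because $b-\rho_c\le 1/k$. Summing over $c$ bounds the total by $\tfrac1k\sum_c p^*_c+\tfrac1k\sum_c k_c=1+\|\bm{k}\|_1/k\le 2$, using $\|\bm{k}\|_1\le\|\bm{p}^*\|_1=k$ (as $p^*_c=|S^*_c|\ge k_c$). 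Intuitively the factor $3$ decomposes as $1$ for the mass already present in the satisfied colors, $1$ for the padding needed to lift the unsatisfied colors up to their demands $k_c$, and $1$ for the fallback against $\mathsf{OPT}$ when no color is yet satisfied. The only remaining care is routine bookkeeping: checking that the finitely many level sets realize the continuum bounds and that strict pointwise inequalities integrate to a strict inequality on the positive-measure region carrying the edge mass.
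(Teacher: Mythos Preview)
Your proposal is correct and follows essentially the same approach as the paper: a contradiction argument via layer-cake integration of the LP solution, splitting the threshold range according to whether all, some, or no color constraints are satisfied, and using the constraints of $\mathrm{LP}(\bm{p}^*)$ together with Lemma~\ref{lem:LP_LB} to bound each piece. The only notable difference is bookkeeping: where the paper crudely bounds $\int_a^b\sum_{c\notin C_{\mathrm{sat}}(S(r))}k_c\,\rd r\le \|\bm{p}^*\|_1\cdot y^*_{\max}\le 1$ and combines it with $\int_0^b|S(r)|\,\rd r\le \sum_v y^*_v=1$, you instead track a per-color threshold $\rho_c$ and obtain the slightly sharper $\int_{R_1\cup R_2}g(r)\,\rd r\le 1+\|\bm{k}\|_1/k\le 2$; both routes yield the same constant $3$.
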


Based on the above lemma, we prove the following:  
\begin{theorem}
Algorithm~\ref{alg:dalkks} is a $1/3$-approximation algorithm for Problem~\ref{prob:dalkks} and runs in $O((n/|C|)^{|C|} T_\mathrm{LP})$ time, 
where $T_\mathrm{LP}$ is the time complexity required to solve the LP. 
\end{theorem}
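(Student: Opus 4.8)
The plan is to treat Lemma~\ref{lem:LP_LB} and Lemma~\ref{lem:dalkks_key} as black boxes and glue them together with a short case analysis, so that the proof reduces to (i) identifying the single value of $\bm{p}$ that matters and (ii) checking that $\textsf{Make\_it\_feasible}$ never destroys the density guarantee Lemma~\ref{lem:dalkks_key} promises. First I would fix an optimal solution $S^*$ to Problem~\ref{prob:dalkks}, set $\mathsf{OPT}=d(S^*)$ and $p^*_c=|S^*_c|$ for each $c\in C$. Since $S^*$ is feasible, $\bm{k}\leq\bm{p}^*\leq(|V_c|)_{c\in C}$, so $\bm{p}^*$ is one of the vectors enumerated in the outer loop of Algorithm~\ref{alg:dalkks}; it therefore suffices to exhibit, among the candidates built from $\bm{p}^*$, a feasible subset of density at least $\mathsf{OPT}/3$. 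Applying Lemma~\ref{lem:LP_LB} with $S=S^*$ and $\bm{p}=\bm{p}^*$ shows that the optimal value $\lambda$ of $\mathrm{LP}(\bm{p}^*)$ satisfies $\lambda\geq d(S^*)=\mathsf{OPT}$, which lets me convert every ``$\lambda/3$'' bound coming out of Lemma~\ref{lem:dalkks_key} into the desired ``$\mathsf{OPT}/3$''.

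The heart of the argument is then the case split supplied by Lemma~\ref{lem:dalkks_key}, which hands me a threshold set $S=S(r)$ with $r\in\{y^*_v\mid v\in V\}\cup\{0\}$ satisfying exactly one of its three conditions. The key structural observation I would use in every case is that $\textsf{Make\_it\_feasible}(S)$ only adds nodes to deficient colors, bringing each color $c\notin C_\mathrm{sat}(S)$ up to exactly $k_c$; hence it never decreases $|E(\cdot)|$ and produces a set of size exactly $\sum_{c\in C_\mathrm{sat}(S)}|S_c|+\sum_{c\in C\setminus C_\mathrm{sat}(S)}k_c$. In Case~1, $S$ is already feasible, so its density is preserved and is at least $\lambda/3\geq\mathsf{OPT}/3$. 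In Case~2 the size after feasibilizing is precisely the denominator appearing in the lemma, so the density of $\textsf{Make\_it\_feasible}(S)$ is at least $\lambda/3\geq\mathsf{OPT}/3$. In Case~3, where $C_\mathrm{sat}(S)=\emptyset$, feasibilizing yields a set of size $\|\bm{k}\|_1$, and I would combine $|E(S)|\geq|E(S^*)|/3$ with the feasibility inequality $\|\bm{k}\|_1\leq\sum_{c\in C}p^*_c=|S^*|$ to get $d(\textsf{Make\_it\_feasible}(S))\geq \frac{|E(S^*)|/3}{\|\bm{k}\|_1}\geq \frac{|E(S^*)|}{3|S^*|}=\mathsf{OPT}/3$. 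Since Algorithm~\ref{alg:dalkks} returns the densest feasible candidate over all enumerated $\bm{p}$, its output attains density at least $\mathsf{OPT}/3$.

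For the running time, the outer loop ranges over all integer vectors $\bm{p}$ with $\bm{k}\leq\bm{p}\leq(|V_c|)_{c\in C}$, of which there are $\prod_{c\in C}(|V_c|-k_c+1)\leq\prod_{c\in C}(|V_c|+1)$; by the AM--GM inequality together with $\sum_{c\in C}|V_c|=n$ this is at most $(1+n/|C|)^{|C|}=O((n/|C|)^{|C|})$. Each iteration solves one LP in time $T_\mathrm{LP}$ and then processes the $O(n)$ thresholds $r\in\{y^*_v\mid v\in V\}\cup\{0\}$. By sorting the nodes once by $y^*_v$ and sweeping the thresholds in decreasing order, I can maintain $S(r)$, its induced edge count, and the feasibilized density incrementally, so the per-iteration work beyond the LP is only $O(m+n\log n)$ and is dominated by $T_\mathrm{LP}$. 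Multiplying the number of iterations by the per-iteration cost gives the claimed $O((n/|C|)^{|C|}T_\mathrm{LP})$ bound.

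I expect the main obstacle to be Case~3, because there Lemma~\ref{lem:dalkks_key} only guarantees an \emph{edge-count} inequality rather than a density inequality, and turning it into a density bound forces me to invoke the feasibility of $S^*$ in the form $\|\bm{k}\|_1\leq|S^*|$ and to argue that feasibilizing a color-empty set lands at size exactly $\|\bm{k}\|_1$. The secondary point requiring care is confirming in Cases~2 and~3 that $\textsf{Make\_it\_feasible}$ does not overshoot the intended denominator (i.e., that it adds nodes only up to $k_c$ on deficient colors and leaves satisfied colors untouched), since the whole density estimate hinges on the final size matching the expression in the lemma.
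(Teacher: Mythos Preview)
Your proposal is correct and follows essentially the same approach as the paper's proof: fix $\bm{p}^*=(|S^*_c|)_{c\in C}$, invoke Lemma~\ref{lem:dalkks_key} to obtain the promised $S(r)$, and observe that $\textsf{Make\_it\_feasible}(S(r))$ is a $1/3$-approximate solution. The paper compresses your three-case analysis into the single phrase ``it is easy to see that $\textsf{Make\_it\_feasible}(S)$ is a $1/3$-approximate solution'' and declares the running-time bound ``straightforward,'' so your write-up simply unpacks details (the use of Lemma~\ref{lem:LP_LB} to get $\lambda\geq\mathsf{OPT}$, the exact size after feasibilizing in each case, and the AM--GM count of the outer loop) that the paper leaves implicit.
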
 
\begin{proof}
As the time complexity analysis is straightforward, we show the approximation ratio of $1/3$. 
Let $S^*\subseteq V$ be an optimal solution to Problem~\ref{prob:dalkks} and 
for each $c\in C$, let $p^*_c=|S^*_c|$. 
Looking at Lemma~\ref{lem:dalkks_key}, 
we see that when the algorithm solves $\mathrm{LP}(\bm{p}^*)$, 
one of the possible $S(r)$'s itself is just the subset of nodes $S\subseteq V$ whose existence is guaranteed in Lemma~\ref{lem:dalkks_key}. 
Then it is easy to see that 
$\textsf{Make\_it\_feasible}(S)$ is a $1/3$-approximate solution. 
Therefore, $\textsf{Best}(\bm{p}^*)$ is also a $1/3$-approximate solution. 
Noticing that the algorithm outputs the best of $\textsf{Best}(\bm{p})$'s for all possible $\bm{p}$ with $\bm{k}\leq \bm{p}\leq (|V_c|)_{c\in C}$ (containing $\bm{p^*}$), we are done. 
\end{proof}

Here we mention a simple acceleration technique for Algorithm~\ref{alg:dalkks}. 
As shown in Lemma~\ref{lem:LP_LB}, $\text{LP}(\bm{p}^*)$ with $p^*_c=|S^*_c|$ has the optimal value greater than or equal to $d(S^*)$. 
Therefore, if we solve $\text{LP}(\bm{p})$ for some $\bm{p}$ and its optimal value is less than the density of the current best feasible solution, the LP must not be $\text{LP}(\bm{p}^*)$ and thus we can skip the procedure to be applied to its optimal solution (i.e., Lines 3 and 4). 
In the next section, we present an additional acceleration technique that can reduce the number of LPs to solve.

\subsection{Acceleration via Greedy Peeling}
Our acceleration technique for Algorithm~\ref{alg:dalkks} is based on greedy peeling. 
Algorithm~\ref{alg:peeling} is a straightforward application of greedy peeling to Problem~\ref{prob:dalkks}, 
where for $S\subseteq V$ and $v\in S$, $\text{deg}_S(v)$ denotes the degree of $v$ in $G[S]=(S,E(S))$. 

\begin{algorithm}[t]
\caption{Greedy peeling algorithm for Problem~\ref{prob:dalkks}}\label{alg:peeling}
\SetKwInOut{Input}{Input}
\SetKwInOut{Output}{Output}
\Input{\ $G=(V,E)$, $\ell \colon V\rightarrow C$, $\bm{k}=(k_c)_{c\in C}\in \mathbb{Z}_{\geq 0}^{|C|}$}
\Output{\ $S\subseteq V$}
$S^{(n)}\leftarrow V$, $i\leftarrow n$\;
\While{$|S^{(i)}_c|>k_c\ $ for every $c\in C$ with $k_c\geq 1$}{ 
  $v_\text{min}\leftarrow \argmin_{v\in S^{(i)}}\degree_{S^{(i)}}(v)$\;
  $S^{(i-1)}\leftarrow S^{(i)}\setminus \{v_\text{min}\}$, $i\leftarrow i-1$\;
  }
\Return $\argmax\{d(S)\mid S\in \{S^{(n)},\dots, S^{(i)}\}\}$\;
\end{algorithm}

This algorithm achieves an approximation ratio of $1/2$ in a very specific case as follows: 
\begin{lemma}
Assume that there exists an optimal solution $S^*\subseteq V$ that satisfies $|S^*_c|>k_c$ for every $c\in C$ with $k_c\geq 1$. 
Then Algorithm~\ref{alg:peeling} outputs a $1/2$-approximate solution for Problem~\ref{prob:dalkks}. 
\end{lemma}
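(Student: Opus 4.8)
The plan is to adapt the classical Charikar--Kortsarz--Nutov greedy-peeling analysis, invoking the strict-slack hypothesis $|S^*_c| > k_c$ at exactly two points: to obtain a local-optimality degree bound on $S^*$, and to guarantee that the peeling reaches into $S^*$ while staying feasible. First I would prove that every $v \in S^*$ satisfies $\degree_{S^*}(v) \geq d(S^*)$. The hypothesis lets us delete any single node of $S^*$ without violating feasibility: if $\ell(v)=c$ has $k_c \geq 1$, then $|S^*_c| > k_c$ gives $|S^*_c|-1 \geq k_c$, while if $k_c = 0$ the constraint on $c$ is vacuous; hence $S^* \setminus \{v\}$ is feasible. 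Optimality of $S^*$ then forces $d(S^* \setminus \{v\}) \leq d(S^*)$, and rearranging $\frac{|E(S^*)| - \degree_{S^*}(v)}{|S^*|-1} \leq \frac{|E(S^*)|}{|S^*|}$ yields the bound. (Since the instance has $k_{c_0} \geq 1$ for some $c_0$ and $|S^*_{c_0}| > k_{c_0} \geq 1$, we have $|S^*| \geq 2$, so the denominator is nonzero.)

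Next I would show that the while loop cannot stop before deleting a node of $S^*$. Suppose for contradiction that at termination the current set $S^{(i)}$ still satisfies $S^* \subseteq S^{(i)}$. Termination means $|S^{(i)}_c| \leq k_c$ for some color $c$ with $k_c \geq 1$; but $S^* \subseteq S^{(i)}$ would give $|S^*_c| \leq |S^{(i)}_c| \leq k_c$, contradicting $|S^*_c| > k_c$. Therefore there is a first peeling step, say removing $v_\mathrm{min}$ from a set $\hat{S}$, with $v_\mathrm{min} \in S^*$; since all earlier removals lay outside $S^*$, this $\hat{S}$ satisfies $S^* \subseteq \hat{S}$, and $\hat{S}$ is one of the sets the algorithm records.

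The density bound then follows in the standard way. Because $S^* \subseteq \hat{S}$ we have $\degree_{\hat{S}}(v_\mathrm{min}) \geq \degree_{S^*}(v_\mathrm{min}) \geq d(S^*)$, and since $v_\mathrm{min}$ has minimum degree in $\hat{S}$, every $u \in \hat{S}$ obeys $\degree_{\hat{S}}(u) \geq d(S^*)$. Summing, $|E(\hat{S})| = \tfrac12 \sum_{u \in \hat{S}} \degree_{\hat{S}}(u) \geq \tfrac12 |\hat{S}|\, d(S^*)$, i.e.\ $d(\hat{S}) \geq \tfrac12 d(S^*)$. It remains to note that $\hat{S}$ is feasible and that the algorithm outputs a set at least as dense: each recorded set arises from one satisfying $|S_c| > k_c$ (for every $c$ with $k_c \geq 1$) by deleting one node, so every color count stays $\geq k_c$; thus all of $S^{(n)}, \dots, S^{(i)}$ are feasible, $\hat{S}$ is among them, and the returned densest set has density $\geq \tfrac12 d(S^*)$.

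The main obstacle is precisely the coupling between the cardinality constraints and the peeling rule: without the strict-slack hypothesis, neither the degree bound (deleting $v$ might break feasibility, so optimality gives nothing) nor the guarantee that peeling enters $S^*$ (the loop could halt with $S^*$ fully inside the current set) would hold. Once these two structural facts are secured, the degree-summation step is routine.
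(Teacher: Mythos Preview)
Your proof is correct and follows essentially the same approach as the paper: establish the local-optimality degree bound $\degree_{S^*}(v)\ge d(S^*)$ via feasibility of $S^*\setminus\{v\}$, argue that peeling must remove some node of $S^*$, and bound the density of the set just before that removal via the minimum-degree rule. Your version is in fact slightly more careful than the paper's (explicitly handling the $k_c=0$ case, checking $|S^*|\ge 2$, and verifying feasibility of all recorded sets), but the argument is the same.
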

\begin{proof}
It is easy to see that for any $v\in S^*$, $S^*\setminus \{v\}$ is a feasible solution for Problem~\ref{prob:dalkks}. 
Therefore, we have that for any $v\in S^*$,
\begin{align*}
   d(S^*)\geq d(S^*\setminus \{v\}). 
\end{align*}
Transforming the above inequality, we have that for any $v\in S^*$, 
\begin{align}\label{ineq:optimality}
  \degree_{S^*}(v)\geq d(S^*).
\end{align}

Let $v^*$ be the node that is contained in $S^*$ and removed first by the algorithm. 
Note that the existence of such a node is guaranteed due to the assumption of the lemma and the design of the algorithm. 
Let $S'\subseteq V$ be the subset of nodes kept just before the removal of $v^*$. 
Obviously $S^*\subseteq S'$ and thus $S'$ is a feasible solution of Problem~\ref{prob:dalkks}. 
We can evaluate the density of $S'$ as follows:
\begin{align*}
    d(S') &= \frac{\frac{1}{2}\sum_{v\in S'}\degree_{S'}(v)}{|S'|}
    \geq \frac{1}{2}\degree_{S^*}(v^*)
    \geq \frac{1}{2}d(S^*),
\end{align*}
where the first inequality follows from the greedy choice of $v^*$ and the relation $S'\supseteq S^*$, 
and the second inequality follows from inequality \eqref{ineq:optimality}. 
Noticing that $S'$ is one of the output candidates of the algorithm, we have the lemma. 
\end{proof}

Finally, our accelerated algorithm is described in Algorithm~\ref{alg:dalkks_faster}. 

\begin{algorithm}[t]
\caption{Accelerated $1/3$-approximation algorithm for Problem~\ref{prob:dalkks}}\label{alg:dalkks_faster}
\SetKwInOut{Input}{Input}
\SetKwInOut{Output}{Output}
\Input{\ $G=(V,E)$, $\ell \colon V\rightarrow C$, $\bm{k}=(k_c)_{c\in C}\in \mathbb{Z}_{\geq 0}^{|C|}$}
\Output{\ $S\subseteq V$}
Run Algorithm~\ref{alg:peeling} and obtain its output $S_\text{peel}$\;
Run Algorithm~\ref{alg:dalkks} after replacing ``$\bm{k}\leq \bm{p}\leq (|V_c|)_{c\in C}$'' by ``$\bm{k}\leq \bm{p}\leq (|V_c|)_{c\in C}$ \textit{with} $p_c=k_c$ for some $c\in C$ with $k_c\geq 1$'' in Lines 1 and 5, and obtain its output $S_\text{LP}$\;
\Return $\argmax\{d(S)\mid S\in \{S_\text{peel}, S_\text{LP}\}\}$\;
\end{algorithm}

\begin{theorem}
Algorithm~\ref{alg:dalkks_faster} is a $1/3$-approximation algorithm for Problem~\ref{prob:dalkks} and runs in $O(|C|(n/|C|)^{|C|-1}T_\mathrm{LP})$ time, where $T_\mathrm{LP}$ is the time complexity required to solve the LP. 
\end{theorem}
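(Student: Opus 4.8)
The plan is to establish the approximation ratio and the running time separately, with the correctness argument resting on a dichotomy that makes the two components of Algorithm~\ref{alg:dalkks_faster} complementary: $S_\text{peel}$ (from greedy peeling) handles the instances where every cardinality demand is strictly slack at optimality, while $S_\text{LP}$ (from the restricted LP enumeration) handles the rest.

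For the approximation guarantee, I would fix an optimal solution $S^*$ to Problem~\ref{prob:dalkks} and split into two exhaustive cases according to whether some demand is tight at optimality. \emph{Case 1:} there exists an optimal solution with $|S^*_c| > k_c$ for every $c \in C$ with $k_c \geq 1$. Then the preceding lemma applies verbatim and guarantees that Algorithm~\ref{alg:peeling} outputs a $1/2$-approximate solution $S_\text{peel}$, which is a fortiori $1/3$-approximate. \emph{Case 2:} every optimal solution has $|S^*_c| = k_c$ for at least one color $c$ with $k_c \geq 1$. Fixing such an $S^*$ and setting $p^*_c = |S^*_c|$, the vector $\bm{p}^*$ then satisfies $p^*_c = k_c$ for some $c$ with $k_c \geq 1$, which is exactly the condition defining the restricted enumeration range used by the modified Algorithm~\ref{alg:dalkks} inside Algorithm~\ref{alg:dalkks_faster}. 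Hence $\mathrm{LP}(\bm{p}^*)$ is solved, and the argument in the proof of the theorem for Algorithm~\ref{alg:dalkks} — applying Lemma~\ref{lem:dalkks_key} to $\bm{p}^*$ — yields that $S_\text{LP}$ is a $1/3$-approximate solution. Since the two cases are exhaustive and Algorithm~\ref{alg:dalkks_faster} returns $\argmax\{d(S_\text{peel}), d(S_\text{LP})\}$, the output is $1/3$-approximate in every case.

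For the running time, I would first note that Algorithm~\ref{alg:peeling} runs in polynomial (near-linear) time and is therefore dominated by the LP solves. The number of LPs solved by the restricted Algorithm~\ref{alg:dalkks} equals the number of vectors $\bm{p}$ with $\bm{k} \leq \bm{p} \leq (|V_c|)_{c\in C}$ having $p_c = k_c$ for some $c$ with $k_c \geq 1$. By a union bound over the distinguished (tight) color, this count is at most $\sum_{c : k_c \geq 1}\prod_{c' \neq c}(|V_{c'}| - k_{c'} + 1) \leq |C| \cdot \max_{c}\prod_{c' \neq c}|V_{c'}|$. Since $\sum_{c}|V_c| = n$, the AM-GM inequality (the same estimate that bounds the running time of Algorithm~\ref{alg:dalkks} by $O((n/|C|)^{|C|}T_\mathrm{LP})$), applied now to the $|C|-1$ factors, bounds each product by $O((n/|C|)^{|C|-1})$ up to constant factors, giving $O(|C|(n/|C|)^{|C|-1})$ LP solves in total and hence the claimed time.

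The main obstacle I anticipate is verifying that the case distinction is genuinely exhaustive and that restricting $\bm{p}$ loses nothing essential. The subtle point is that the only optimal vectors $\bm{p}^*$ excluded by the restriction are precisely those arising from optimal solutions with \emph{all} relevant demands strictly slack, and these are exactly the instances on which the greedy-peeling lemma already certifies a (stronger) $1/2$-approximation. Making this complementarity airtight — namely that ``$\bm{p}^*$ excluded for every optimal $S^*$'' forces the hypothesis of the greedy-peeling lemma to hold — is the crux of the argument; the counting for the running time is then a routine AM-GM estimate.
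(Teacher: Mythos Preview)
Your proposal is correct and follows essentially the same approach as the paper: the same two-case dichotomy on whether some optimal solution has all demands strictly slack (invoking the greedy-peeling lemma) versus some demand tight (so $\bm{p}^*$ falls in the restricted enumeration and the analysis of Algorithm~\ref{alg:dalkks} carries over), and the running time is bounded by counting the restricted $\bm{p}$'s. Your treatment of the running time via the union bound and AM-GM is more explicit than the paper's, which simply declares the time-complexity analysis trivial.
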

\begin{proof}
The time complexity analysis is again trivial. 
Hence, in what follows, we guarantee the approximation ratio. 
If there exists an optimal solution $S^*\subseteq V$ that satisfies $|S^*_c|>k_c$ for every $c\in C$ with $k_c\geq 1$, then the output $S_\text{peel}$ of Algorithm~\ref{alg:peeling} is a $1/2$-approximate solution; thus, we are done. 
Otherwise there exists an optimal solution $S^*\subseteq V$ that satisfies $|S^*_c|=k_c$ for some $c\in C$ with $k_c\geq 1$. 
The modified version of Algorithm~\ref{alg:dalkks} used in Algorithm~\ref{alg:dalkks_faster} skips some $\bm{p}$'s with $\bm{k}\leq \bm{p}\leq (|V_c|)_{c\in C}$ but still tests $\bm{p}^*$ with $p^*_c=|S^*_c|$ for any $c\in C$. 
Therefore, the analysis similar to that for Algorithm~\ref{alg:dalkks} still works and we see that $S_\text{LP}$ is a $1/3$-approximate solution. 
Therefore, we have the theorem. 
\end{proof}

\section{Experimental Evaluation}\label{sec:experiments}

In this section, we evaluate our proposed algorithms using a variety of synthetic and real-world networks.

\subsection{Experimental Setup}
\label{subsec:setup} 

\begin{table}[t]
\caption{Dataset statistics summary.}
\label{tab:datasum}
\scalebox{0.86}{
\begin{tabular}{ccccc}
\toprule
Dataset & $|C|$ &$n$  & $m$         & $\alpha(V)$ \\
\midrule
Amazon Product &$2$    & 3.3K $\pm 4.7K$  & 24.7K $\pm 47.3K$ & 0.68 $\pm 0.12$   \\ 
Facebook100 &$2$ or $4$       & 10.7K $\pm 8.4K$ & 399K $\pm 315K$   & 0.48 $\pm 0.09$   \\ 
\midrule
GitHub Developers &$2$ & 37,700 &289,003   & 0.74                  \\  
LastFM Asia &$18$      & 7,624 & 27,806   & 0.20                 \\  
Deezer Europe &$2$     & 28,281 &92,752   & 0.56                  \\  
DBLP &$6$              & 25,176 &151,670   & 0.32                  \\  
\bottomrule
\end{tabular}
}
\end{table}

\noindent \textbf{Datasets.} 
We use two collections of datasets and four single-graph datasets, with attributes for the nodes in the graphs. 
Table~\ref{tab:datasum} summarizes the statistics of the following datasets, 
where for the first two datasets, we put the average values with the standard deviations.

 \begin{itemize}
 \leftskip=-10pt
 \item \textit{Amazon Product Metadata} - 299 graphs \cite{ni2019justifying}. This consists of a collection of 299 graphs, each with 2 colors, as curated by Anagnostopoulos et al.~\cite{anagnostopoulos2020spectral}. Classes are product categories, while edges indicate that products were co-purchased.

 \item \textit{Facebook100} - 100 graphs \cite{traud2012social}. This contains 100 anonymized networks of the Facebook social network from universities across the United States. Nodes have demographic attributes, like profession (faculty/student), gender (male/female), the year they joined the university, etc. For our purpose, we also create $2^2=4$ categories that combine profession and gender. The largest graph in terms of $m$ has 1,382,325 edges. 

 \item \textit{GitHub Developers} \cite{rozemberczki2019multiscale}. Nodes are developers in GitHub who have starred at least 10 repositories, and edges are mutual follower relationships between them. The attribute is whether the user is a web or a machine learning developer.

 \item \textit{LastFM Asia Social Network} \cite{feather}.  Nodes are LastFM users from Asian countries and edges are mutual follower relationships between them. The attribute is the location of a user.

 \item \textit{Deezer Europe Social Network} \cite{feather}. Nodes are Deezer users from European countries and edges are mutual follower relationships between them. The attribute is the gender of a user.

 \item \textit{DBLP Co-Authorship}. We also create a dataset from the DBLP database. We create a graph from authors of papers published in major conferences in $6$ areas (Theory, Data Management, Data Mining, Learning, Networking, and Image \& Video Processing) between 2003 and 2022. 
 Nodes are authors with attributes being the area (s)he has most publications. 
 Edges indicate that the two authors have collaborated at least once. 
 See Appendix~\ref{appendix:dblp} for details. 
\end{itemize}

\begin{figure*}[h]
    \centering
    \includegraphics[width=0.233\textwidth]{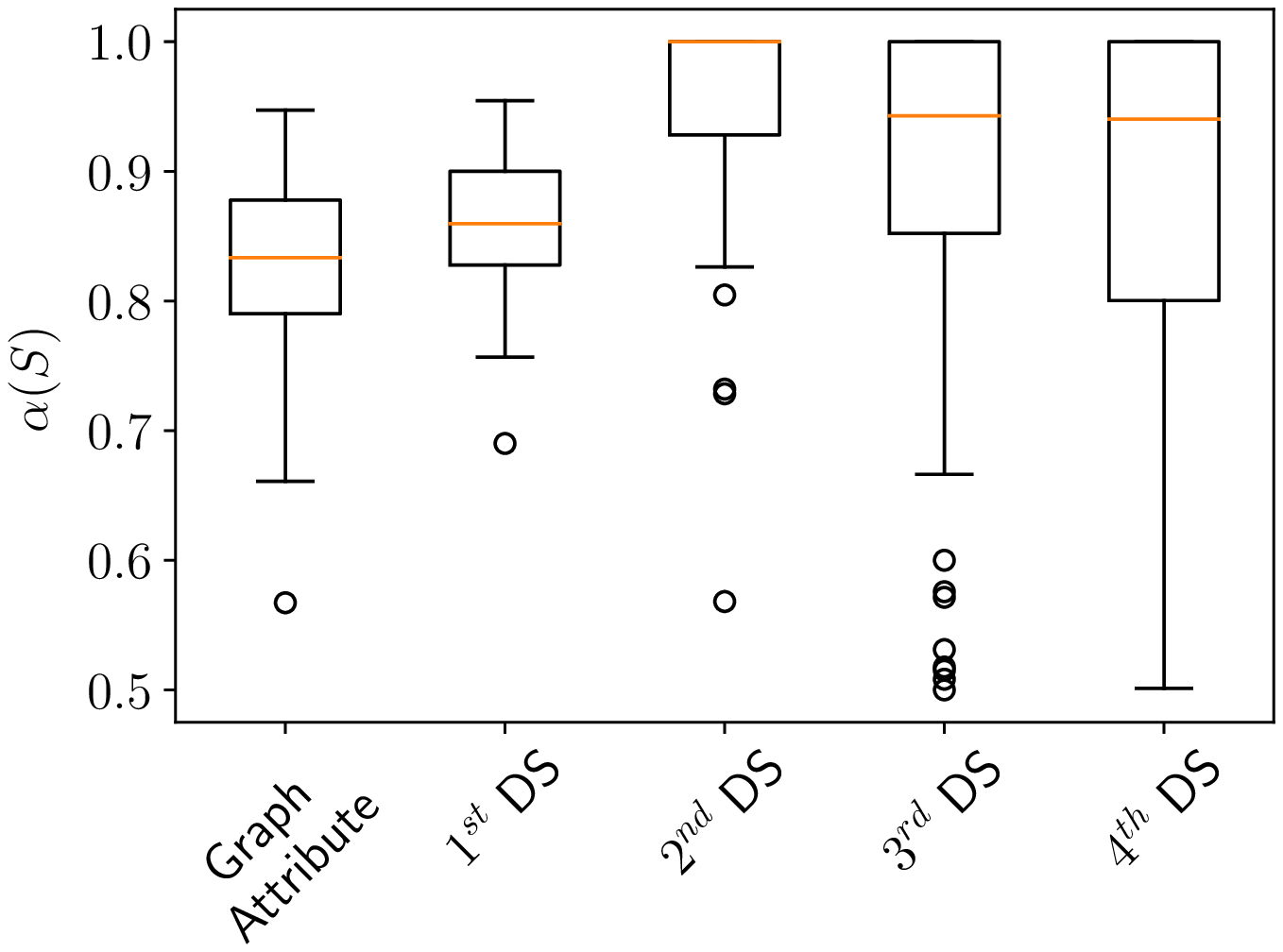}
    \includegraphics[width=0.233\textwidth]{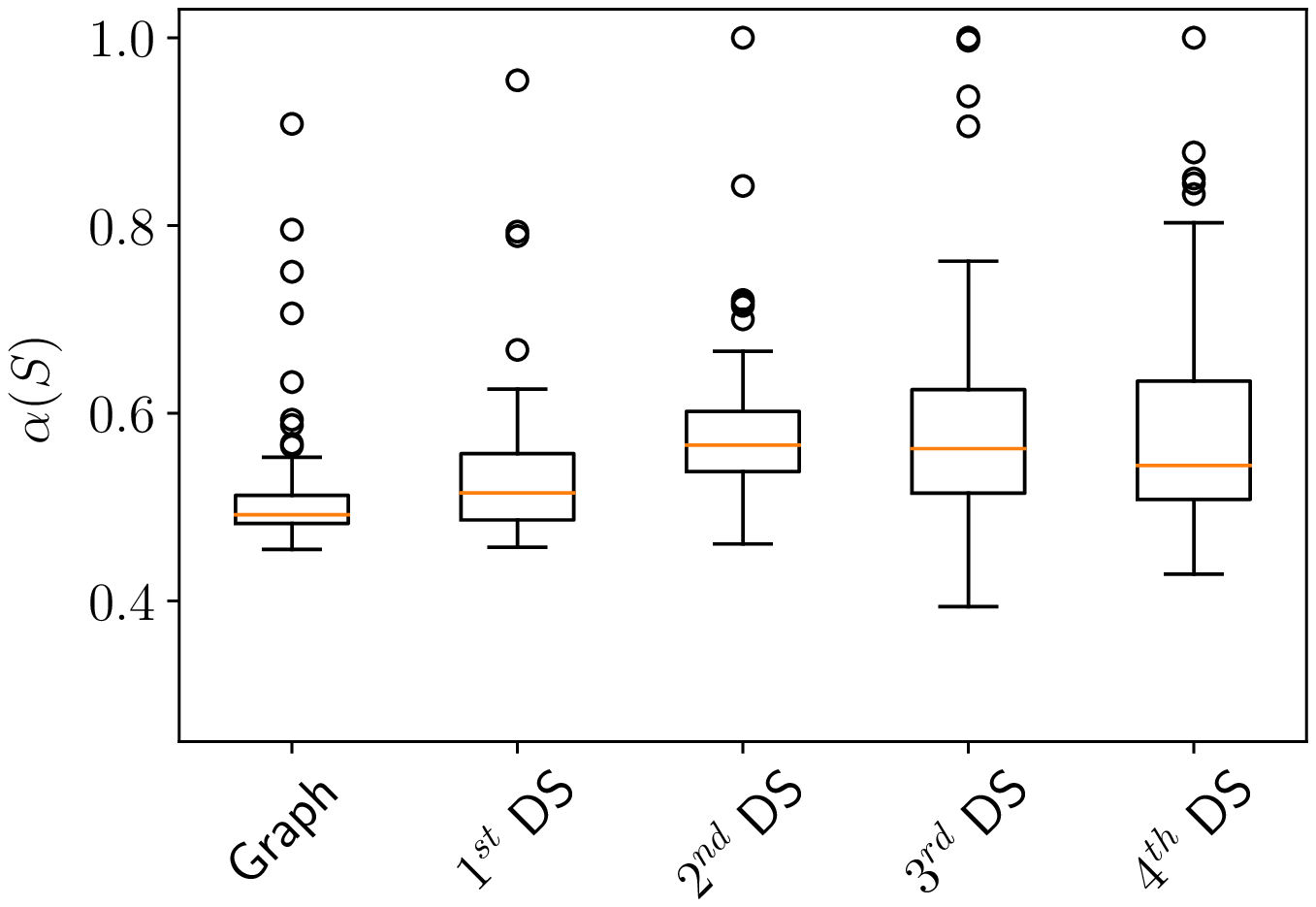}
    \includegraphics[width=0.233\textwidth]{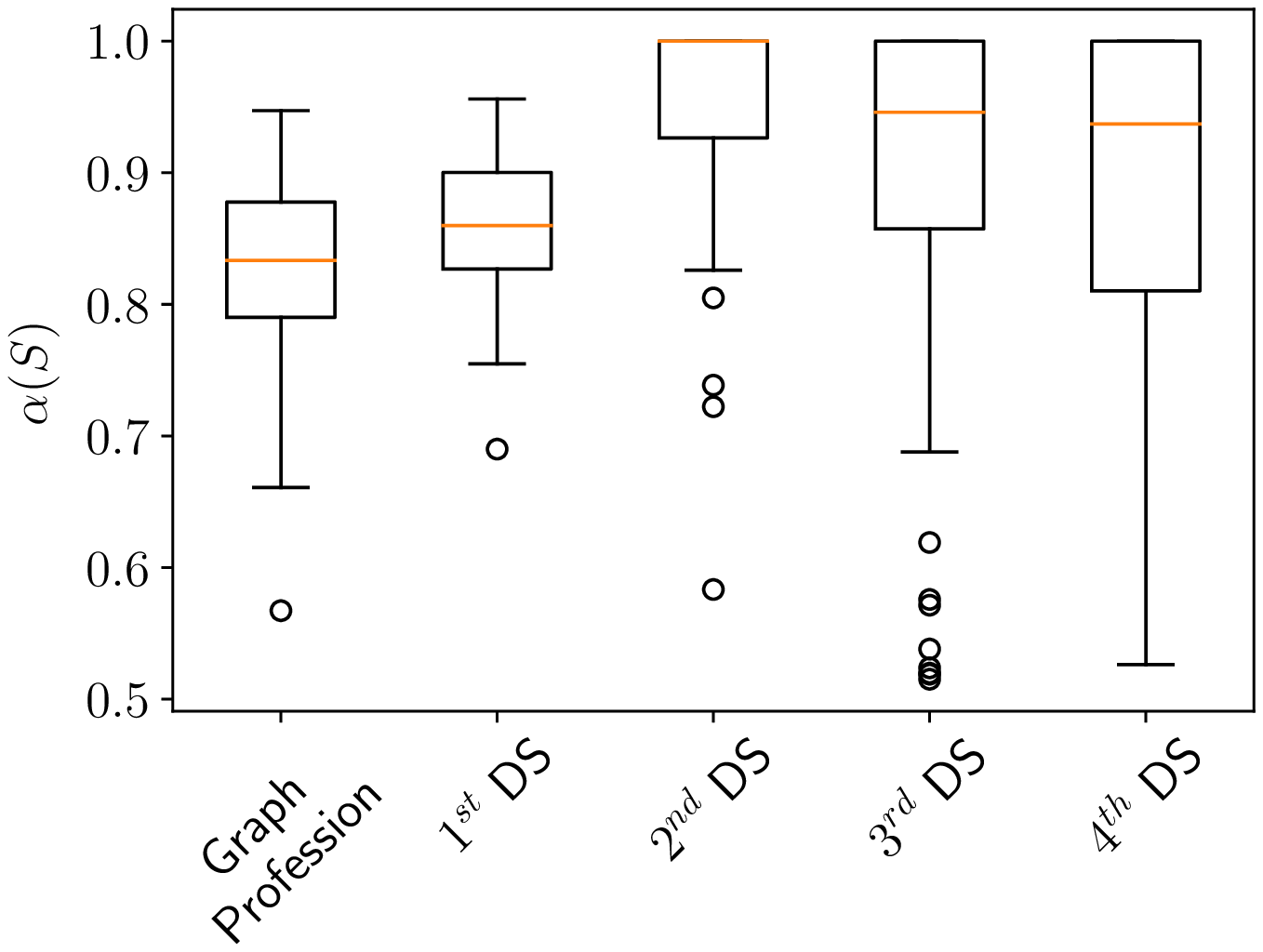} 
    \includegraphics[width=0.233\textwidth]{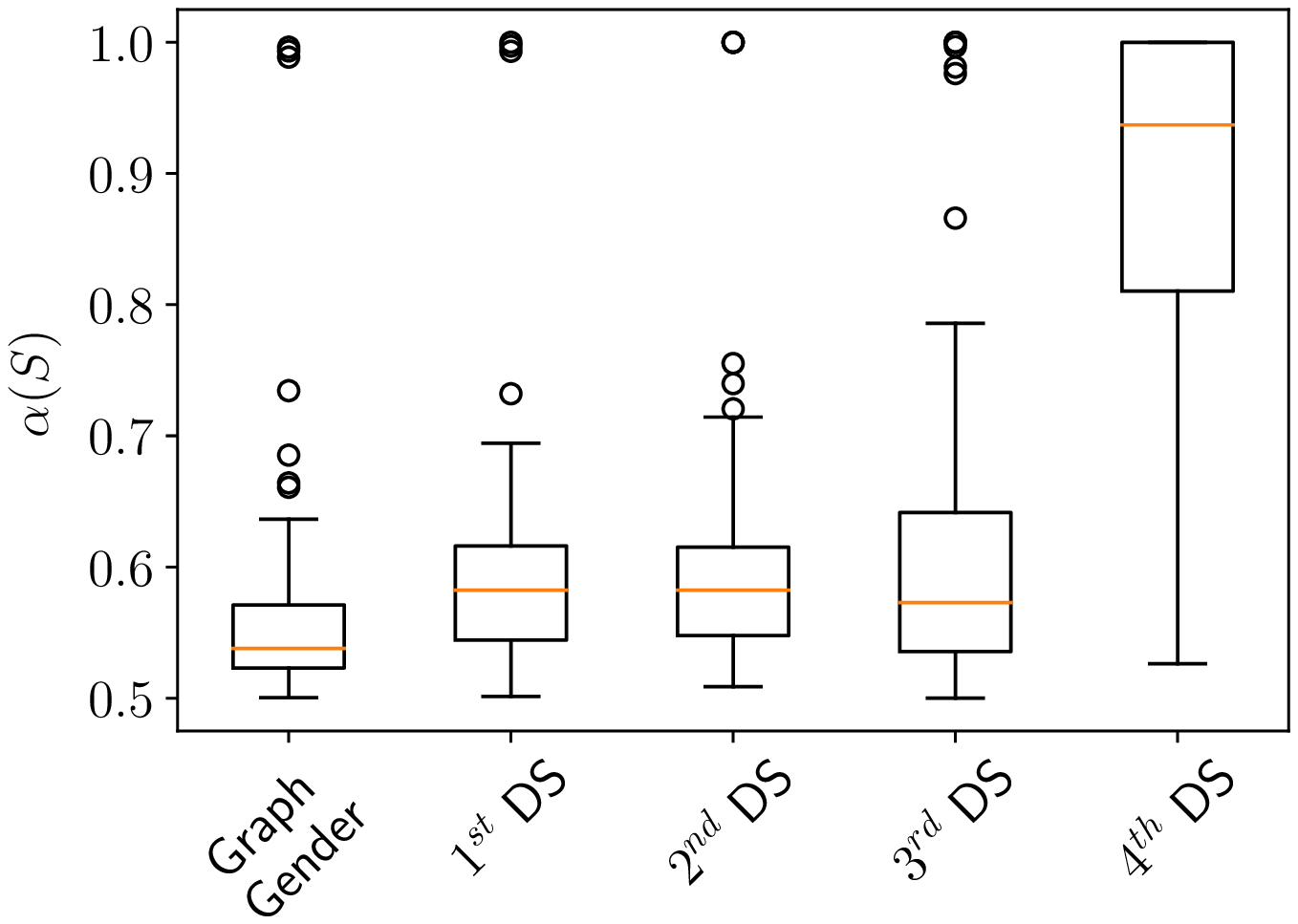}
    \caption{The diversity of the entire graph and the top-4 densest subgraphs of the first two datasets in Table~\ref{tab:datasum}. The first figure is for the Amazon Product dataset ($|C|=2$), while the latter three figures are for the Facebook100 dataset (left: Profession and Gender ($|C|=4$), middle: Profession ($|C|=2$), right: Gender ($|C|=2$)). Note that $\alpha(S)=1/|C|$ means the complete diversity of $S$.}
    \label{fig:dsd_statistics}
\end{figure*}

\smallskip
\noindent \textbf{Baselines.}
For Problem~\ref{prob:fair}, we employ the following baselines, including a novel baseline we devise using node embeddings~\cite{qiu2018network}: 
\begin{itemize}
\leftskip=-10pt
  \item \textit{DSP}. The (unconstrained) densest subgraph. Specifically, we use \textsc{Greedy++} by Boob et al. \cite{boob2020flowless} for 5 iterations, which in practice finds an optimal solution to DSP in very few iterations.
For $S\subseteq V$, we define the normalized density as its density divided by the optimal value to DSP.   

  \item \textit{PS} and \textit{FPS}. The spectral algorithms by Anagnostopoulos et al.~\cite{anagnostopoulos2020spectral} that split the entries of the largest eigenvector of the adjacency matrix (\textit{PS}) and ``fairified'' adjacency matrix (\textit{FPS}), based on their color, and sort them in a similar spirit with spectral clustering.

  \item  \textit{Embedding+Fair Clustering}. We design a novel baseline that first embeds the graph using the node embedding method NetMF \cite{qiu2018network} and then clusters the nodes using the fair $k$-means implementation of Backurs et al.~\cite{backurs2019scalable} for various values of $k$. We compute the density of each fair cluster for each value of $k$, and output the maximum among all.
\end{itemize}
For Problem~\ref{prob:dalkks}, we run the following two baselines: 
\begin{itemize}
   \leftskip=-10pt
\item \textit{IP.} We implement an exact algorithm using integer programming and test its scalability. See Appendix~\ref{appendix:IP} for details. 

\item \textit{Prop2.} We also implement the $1/4$-approximation algorithm introduced in Proposition~\ref{prop:baseline}. To solve Dal$k$S, we use the $1/2$-approximation algorithm based on LP with a greedy peeling acceleration~\cite{Khuller_Saha_09}. 
\end{itemize}

\smallskip
\noindent \textbf{Machine specs and code.} The experiments are performed on a single machine, with Intel i7-10850H CPU @ 2.70GHz and 32GB of main memory. We use C++ for all experiments. Linear programming and integer programming are implemented with SciPy and solved with HiGHS~\cite{huangfu2015highs}.
The code and datasets are available online.\footnote{https://github.com/tsourakakis-lab/densest-diverse-subgraphs}

\subsection{Evaluation of Algorithm~\ref{alg:approx} (for Problem~\ref{prob:fair})}


\smallskip
\noindent \textbf{Preliminary experiments.} Although we have created a reliable algorithmic toolbox for dense diverse clusters, we must assess their practical effectiveness. Specifically, when examining the densest subgraph or the top-$k$ densest subgraphs with sensitive attributes, do we observe diversity in the subgraphs? The answer is generally no, as observed in the vast majority of cases. This trend can result in unfair solutions, particularly in the context of sensitive attributes.  

In Figure~\ref{fig:dsd_statistics} we present the statistics of the entire graph and the top-4 densest subgraphs of the Amazon Product and Facebook100 datasets from Table~\ref{tab:datasum}, using standard box plots. 
As can be seen, the densest subgraphs tend to be not diverse. 
Especially for the Amazon Product dataset (leftmost figure), the densest subgraph most of the times is nearly monochromatic.  This phenomenon is not restricted to the densest subgraph but also found for the top-4 densest subgraphs, implying  that  strong homophily is a key factor behind dense clusters. We observe a similar trend for the Facebook100 dataset (the latter three figures). Even if we restrict our attention to the specific attribute we consider, profession or gender, we see that again they are not represented equally in the densest subgraph. These findings motivate the study of Problem~\ref{prob:fair}, where we can control the extent to which a specific color dominates the subgraph returned by an algorithm.

\smallskip
\noindent \textbf{Solution quality of algorithms.}
We compare Algorithm~\ref{alg:approx} with the baseline algorithms with respect to the density and diversity of the returned subgraphs, 
using the six datasets of Table~\ref{tab:datasum}.
We run Algorithm~\ref{alg:approx} by varying the value of $\alpha$. Note that when $\alpha(V)>\alpha$, Algorithm~\ref{alg:approx} may fail to find a feasible solution.  In that case, we resort to iteratively peeling a node with the minimum degree from $S$ with the most dominant color, until we obtain a feasible solution, or conclude that none exists. \textit{PS} and \textit{FPS} by Anagnostopoulos et. al. \cite{anagnostopoulos2020spectral} return by definition completely-balanced subgraphs, thus corresponding to the solutions of Problem~\ref{prob:fair} for a value of $\alpha = 1/|C|$. 

Our results are depicted in Figure~\ref{fig:dsd_examples} for the two collections of datasets, Amazon Product and Facebook100. 
As these two datasets consist of a number of graphs, we present the average values with the standard deviations. 
We see that Algorithm~\ref{alg:approx} outperforms the baselines. 
Let us first focus on the case of $\alpha= 1/|C|$. 
Algorithm~\ref{alg:approx} returns a better solution than that of \textit{PS} and \textit{FPS} for the Amazon Product dataset, while \textit{PS} and \textit{FPS} are comparable to Algorithm~\ref{alg:approx} for the Facebook100 dataset. We observe similar trends for the single graph datasets, as seen in Figure~\ref{fig:additional_datasets}. More precisely, when $\alpha = 1/|C|$, Algorithm~\ref{alg:approx} performs better than \textit{PS} and \textit{FPS} in $3$ out of $4$ cases.
Moreover, we see that by varying the parameter $\alpha$, we can adjust the trade-off between getting a subgraph as dense as possible (when we let $\alpha$ approach the diversity of the densest subgraph in the graph), or sacrifice density with the aim of getting more diverse subgraphs (completely balanced when we set $\alpha = 1/|C|$). 

\begin{figure}
    \centering
    \includegraphics[width=0.233\textwidth]{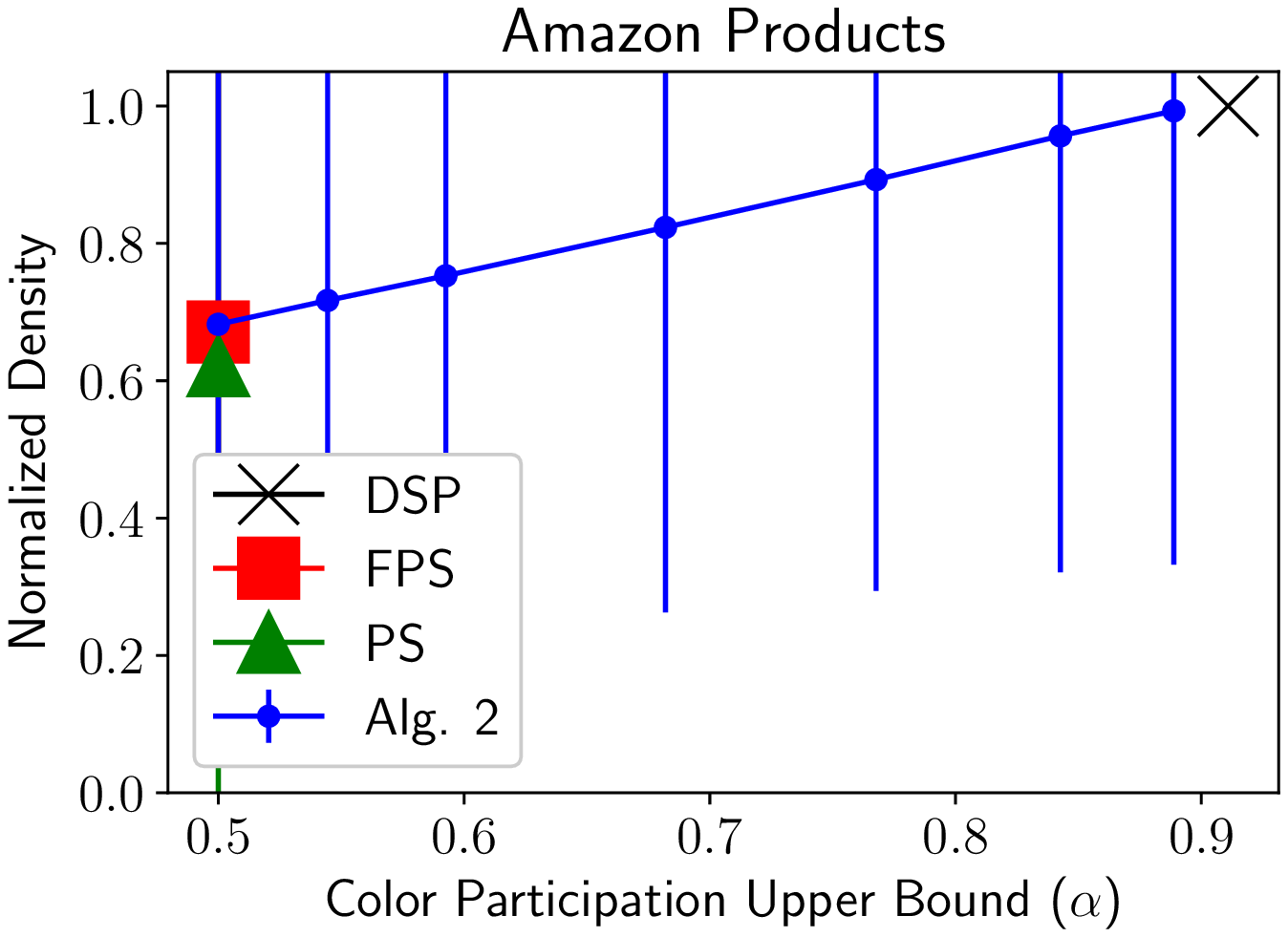}
    \includegraphics[width=0.233\textwidth]{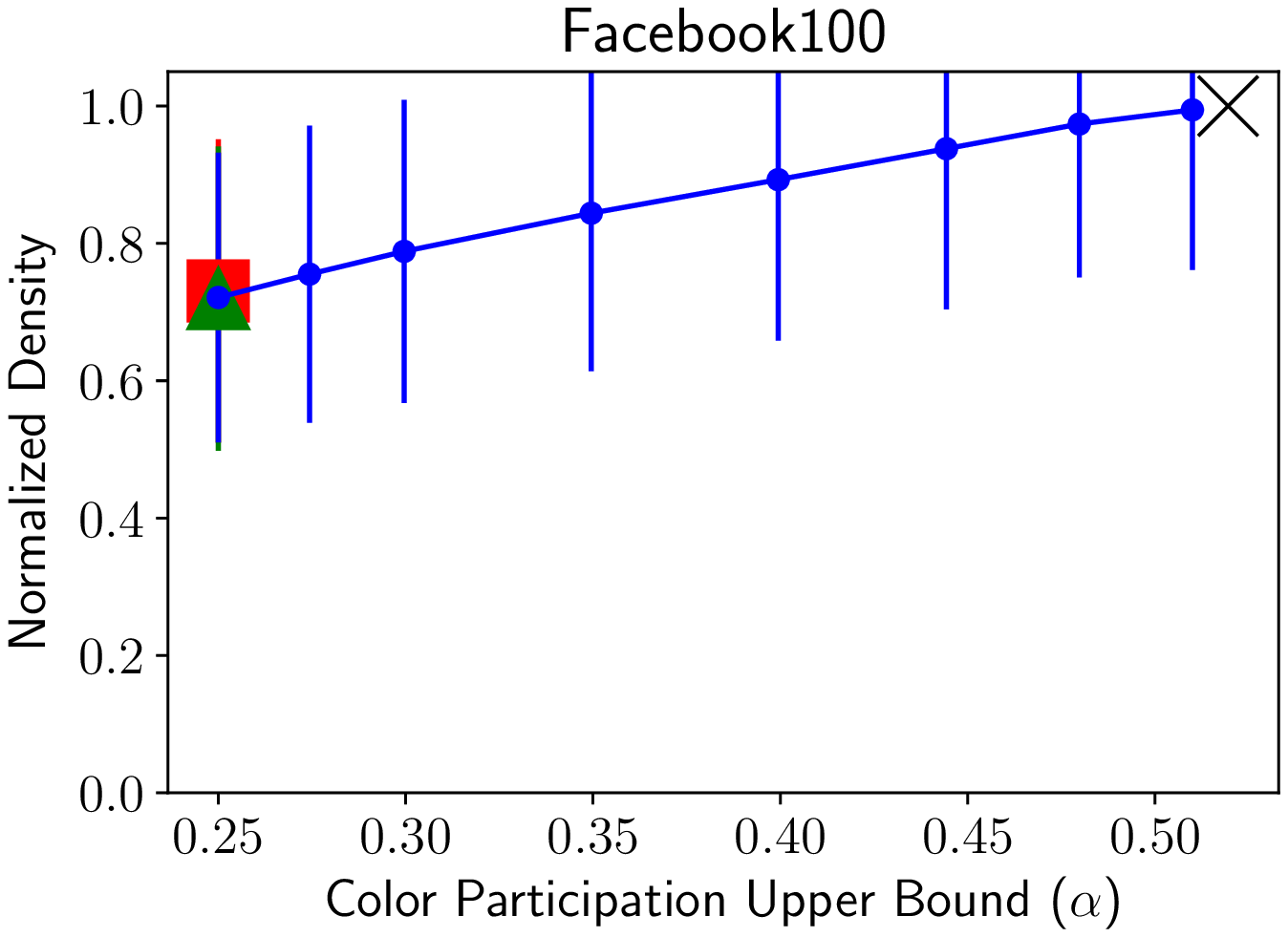}
    \caption{Performance of algorithms for the Amazon Products and Facebook100 datasets.}
    \label{fig:dsd_examples}
\end{figure}

\begin{figure*}[htb!]
        \centering
        
        \includegraphics[width=0.233\textwidth]{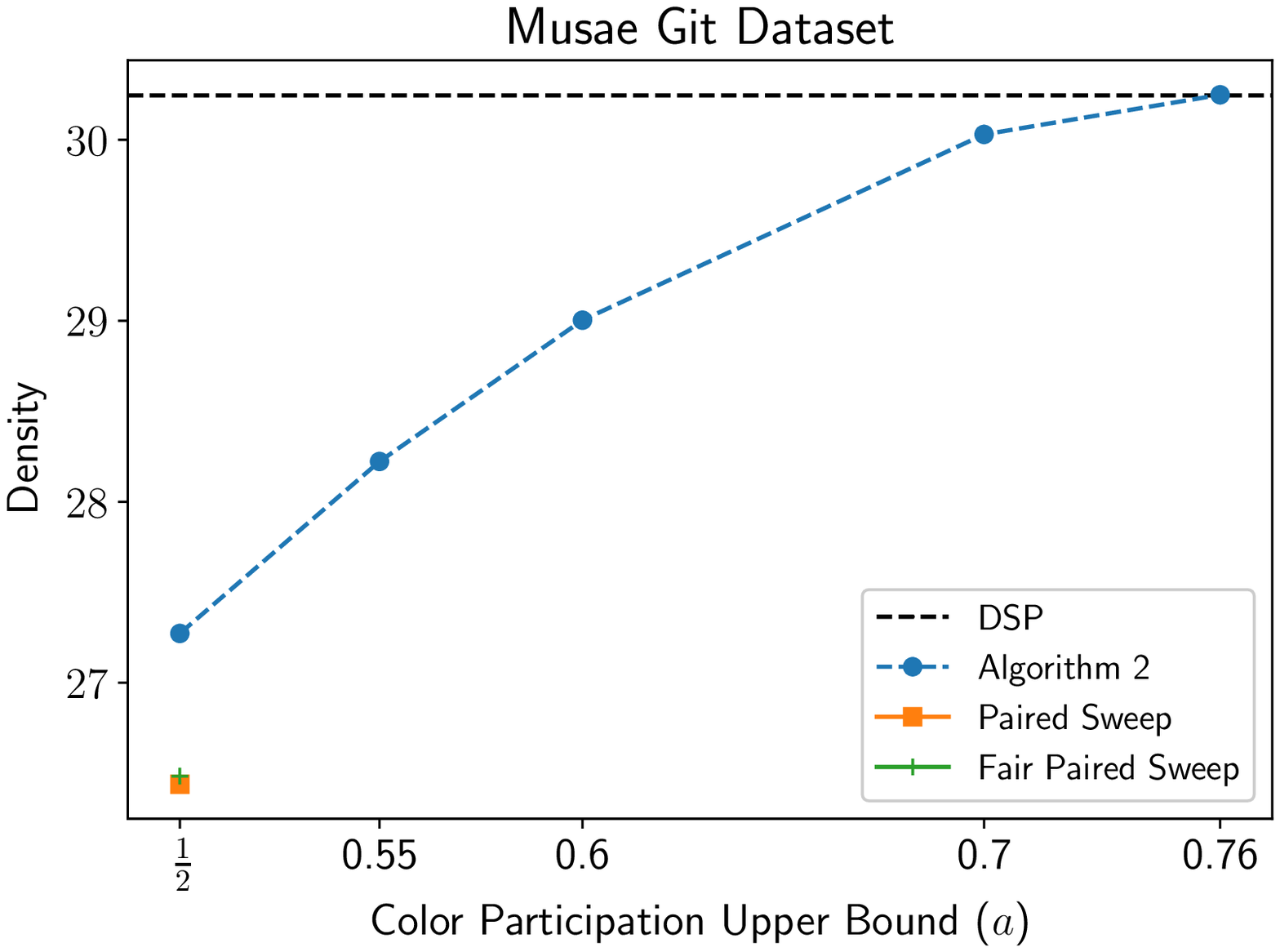}
            \includegraphics[width=0.233\textwidth]{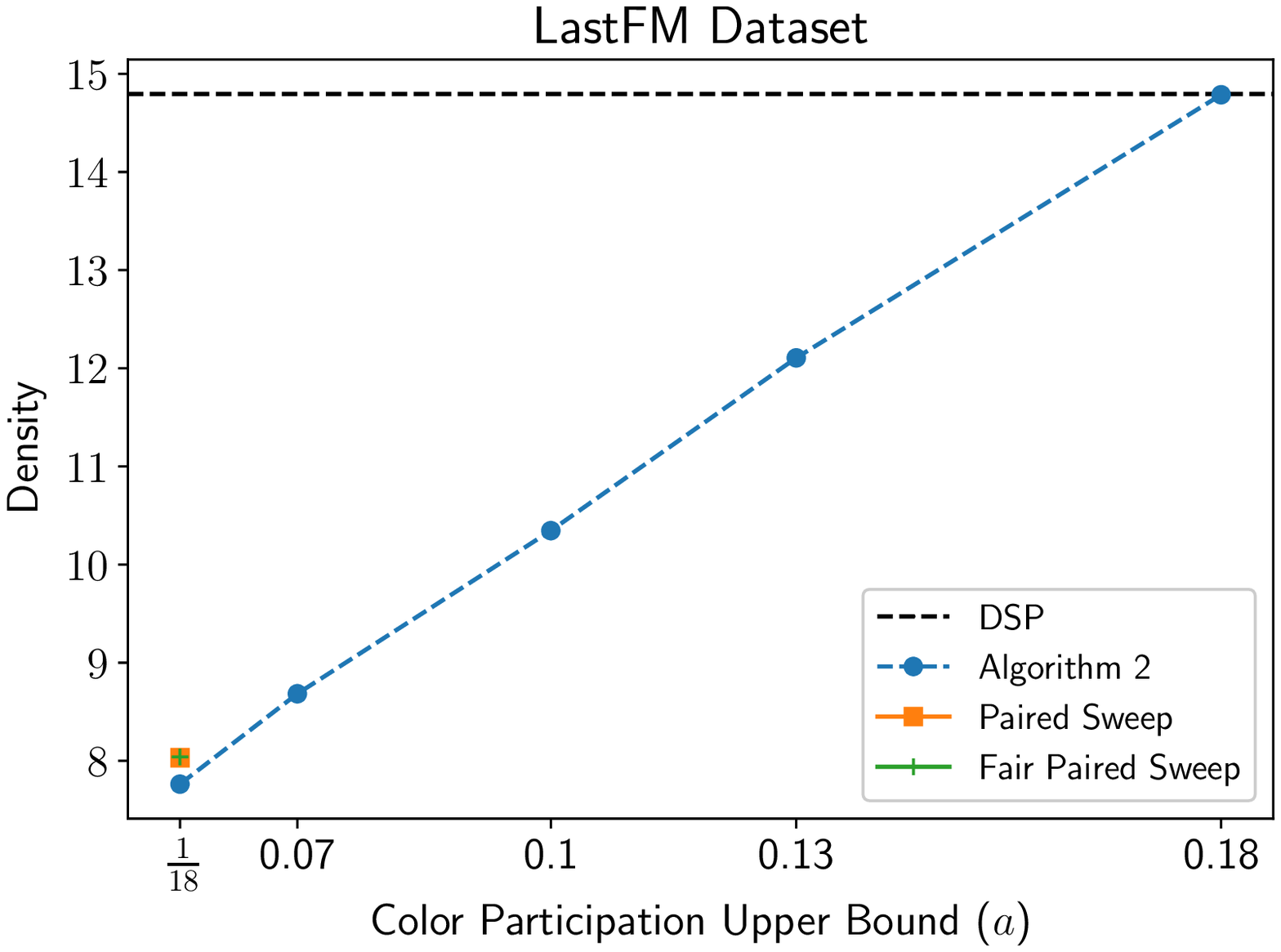}
             \includegraphics[width=0.233\textwidth]{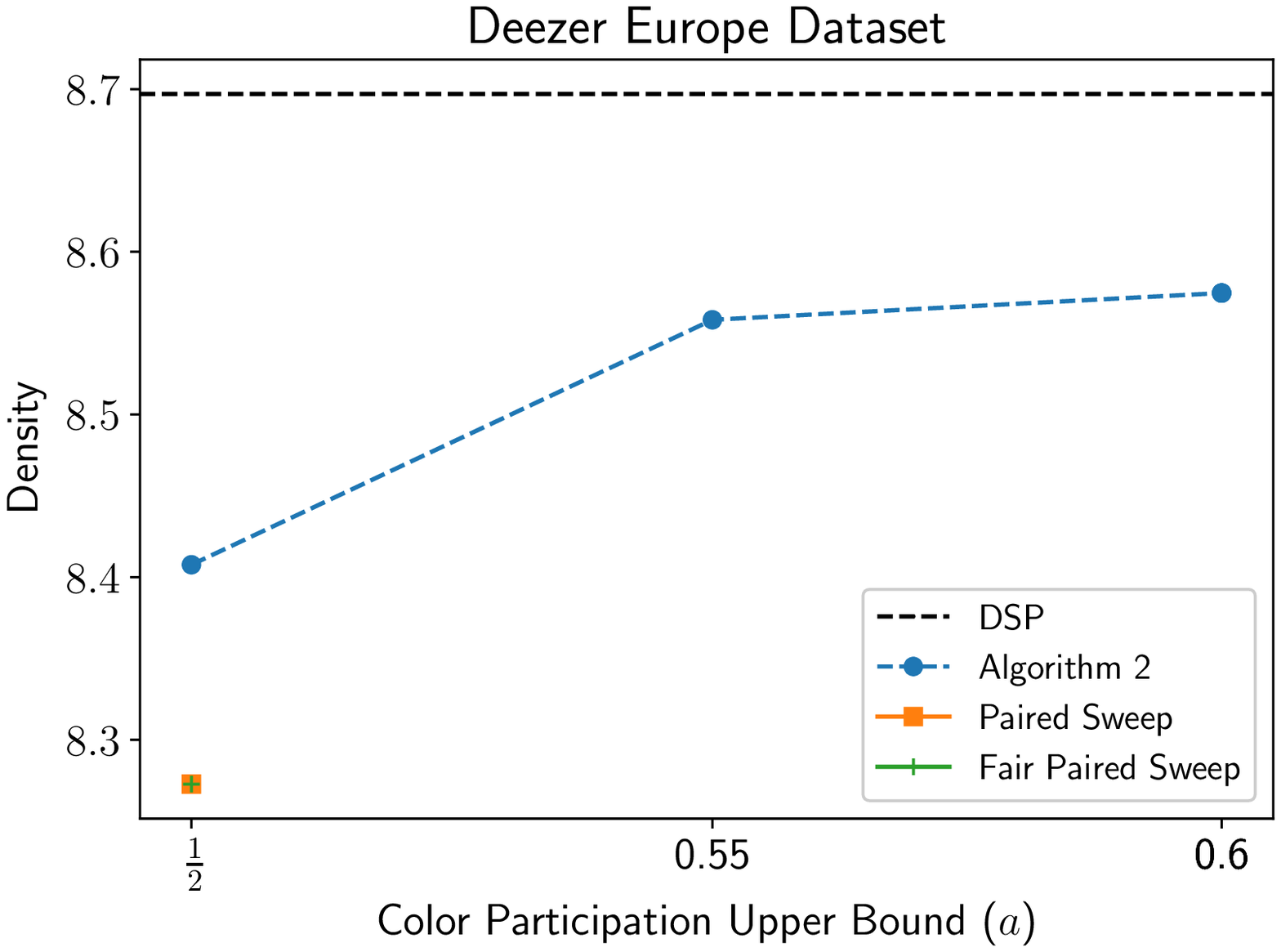}
            \includegraphics[width=0.233\textwidth] {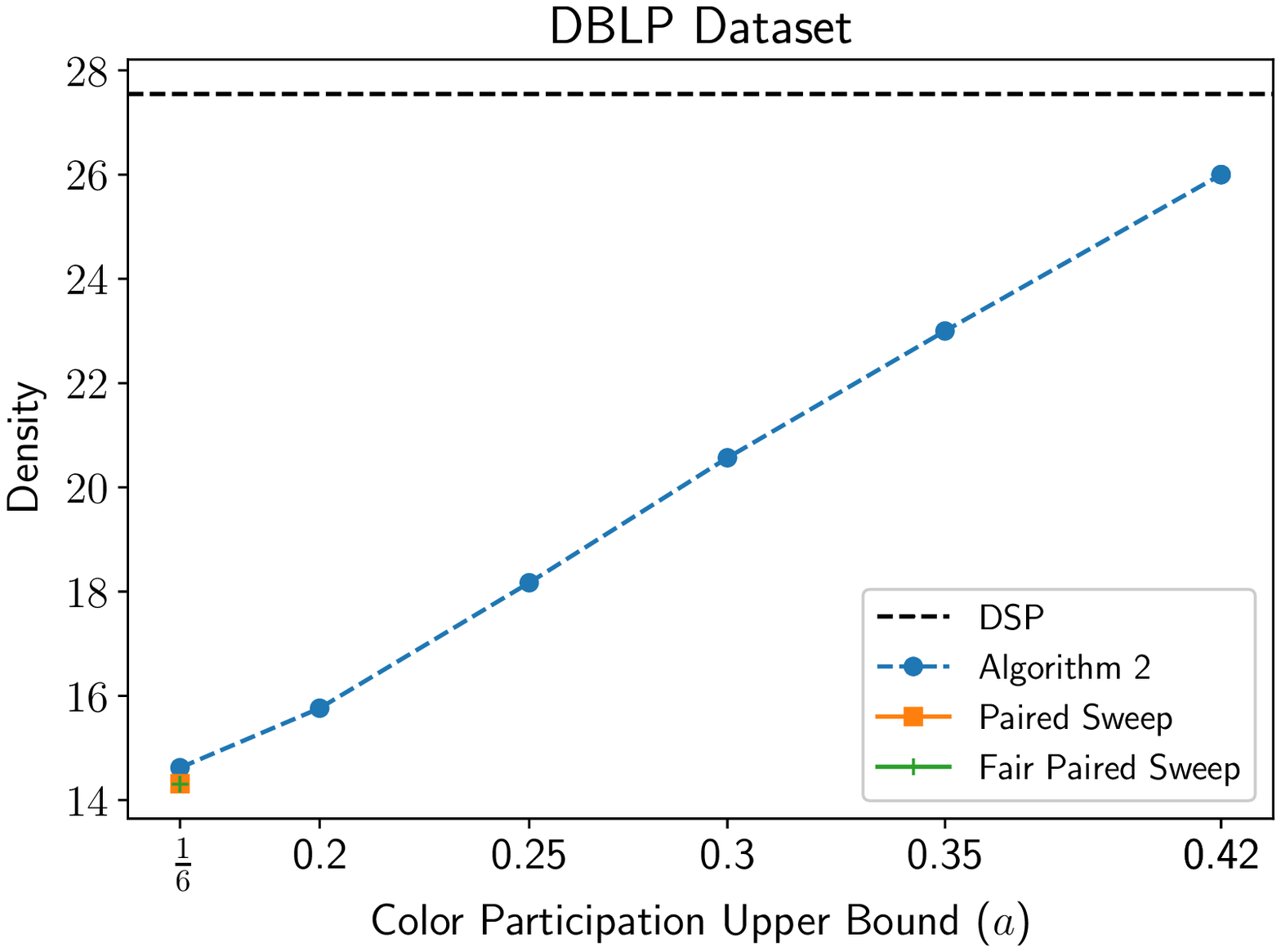}            
        \caption{Performance of algorithms for the single-graph datasets.}
       \label{fig:additional_datasets}
\end{figure*}

\smallskip
\noindent \textbf{(Poor) performance of \textit{Embedding+Fair Clustering}.}
Here we use a random sample of $35$ graphs from the Facebook100 dataset due to the cost of producing embeddings for each graph, as we did not notice significant changes after including more datasets in terms of results. We also use only gender as our attribute (as fair clustering methods allow only two colors), while we range $k$ in $k$-means from 2 to 32 and report the best result. In Figure ~\ref{fig:dsd_embeddings}, we see the results of this baseline in terms of the diversity of the clusters we get, as well their density. We see that simply performing $k$-means on the embedded graph results in clusters that are highly unbalanced. On the other hand, fair $k$-means allows us to get fair clusters by design (Figure ~\ref{fig:dsd_embeddings} (left)), but the corresponding subgraph is not as dense as the output by Algorithm~\ref{alg:approx} (Figure ~\ref{fig:dsd_embeddings} (right)). 
Notice that for the attribute we consider in this case, the ratio of the two colors in the graph is only slightly different from that of the densest subgraph (see the rightmost of Figure~\ref{fig:dsd_statistics}). Hence, Algorithm~\ref{alg:approx} finds subgraphs almost as dense as the densest.

\smallskip
\noindent \textbf{Running time of algorithms.}
In Figure~\ref{fig:runtimes} (left) we report the runtime of algorithms on the Amazon Product dataset. 
We employed this dataset because the sizes of the graphs (in terms of $n$ and $m$) vary substantially so that we can easily see how the running time of algorithms grows. We see that all of them scale (almost) linearly with the number of edges; in particular, the result for Algorithm~\ref{alg:approx} is consistent with the theoretical analysis in Section~\ref{subsec:algo_approx}. 
Note that for Algorithm~\ref{alg:approx}, we report the runtime for the most difficult case, when $\alpha=0.5$; hence the subgraphs need to be completely balanced. As we relax the diversity constraint, the runtime drops significantly as evident on the synthetic graphs, as shown in Figure~\ref{fig:runtimes} (right). The synthetic graph we employed resembles the stochastic block model paradigm. We constructed a graph with 5 clusters of 40,000 nodes each, with higher intra-connectivity within clusters, than inter-connectivity across clusters.  Most importantly, one of the clusters has a much higher density than the rest. Hence, the densest subgraph consists of ~40,000 nodes. Initially, we assign nodes to one of five colors uniformly at random. Hence, the (original) densest subgraph is already diverse. Therefore, Algorithm~\ref{alg:approx} does not need to invoke Procedure~\ref{alg:diversify}. Thus, relaxing the diversity constraint has no impact in this case (blue dotted line). On the other hand, if we assign a unique color to each cluster, then the densest subgraph becomes monochromatic and in order to have a diverse representation we need to involve the whole graph to our solution, resulting in a longer runtime that gets decreased as we relax the diversity constraint (solid orange line). 
Finally we remark that on the Facebook100 dataset 
Algorithm 2 runs in 3s for $\alpha = 0.5$ and 11s for $\alpha = 0.25$ even on the aforementioned largest graph with more than one million edges. 

\begin{figure}[t]
    \centering
    \includegraphics[width=0.233\textwidth]{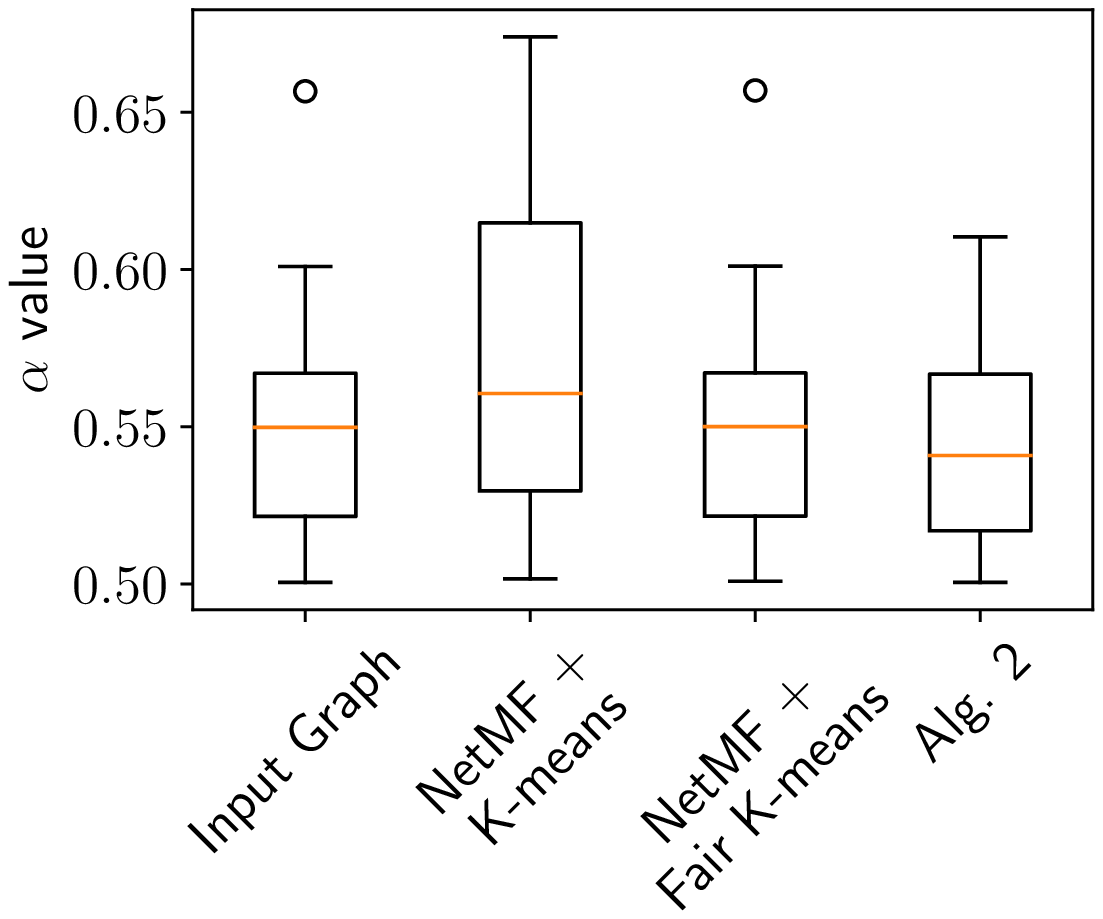}
    \includegraphics[width=0.233\textwidth]{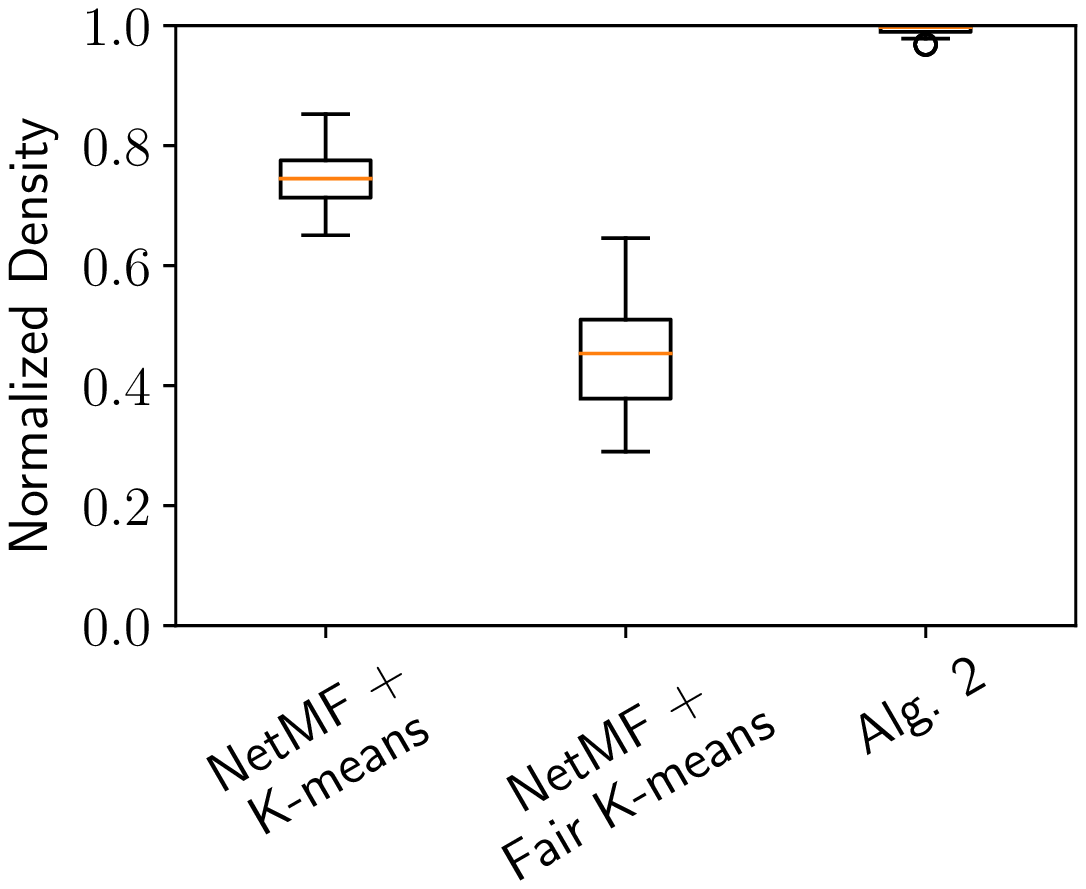}
    \caption{Diversity and normalized density of subgraphs obtained using \textit{Embedding+Fair Clustering} (together with its unfair variant), and Algorithm~\ref{alg:approx}.}
    \label{fig:dsd_embeddings}
\end{figure}

\begin{figure}[t]
    \centering
    \includegraphics[width=0.245\textwidth]{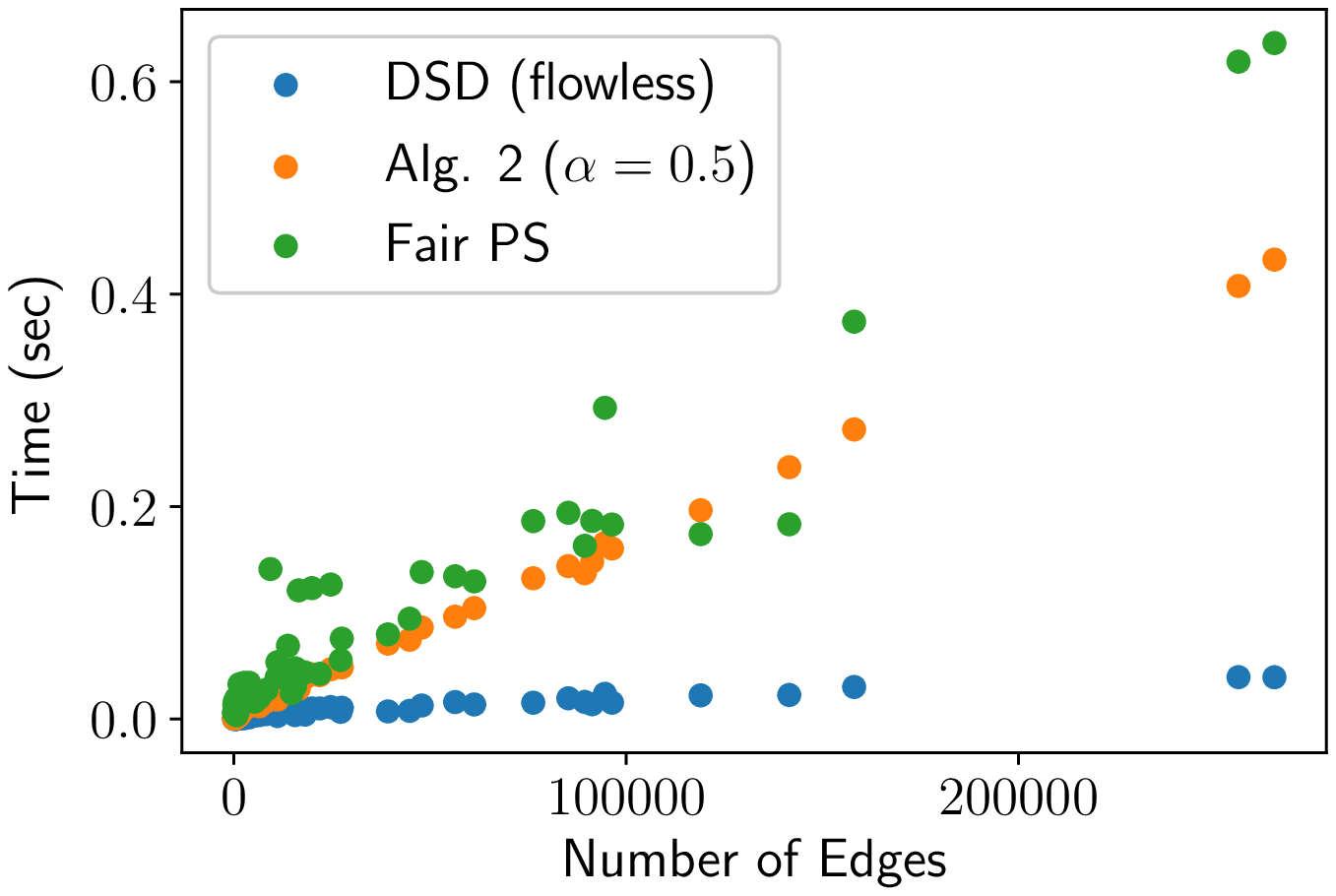}
    \includegraphics[width=0.22\textwidth]{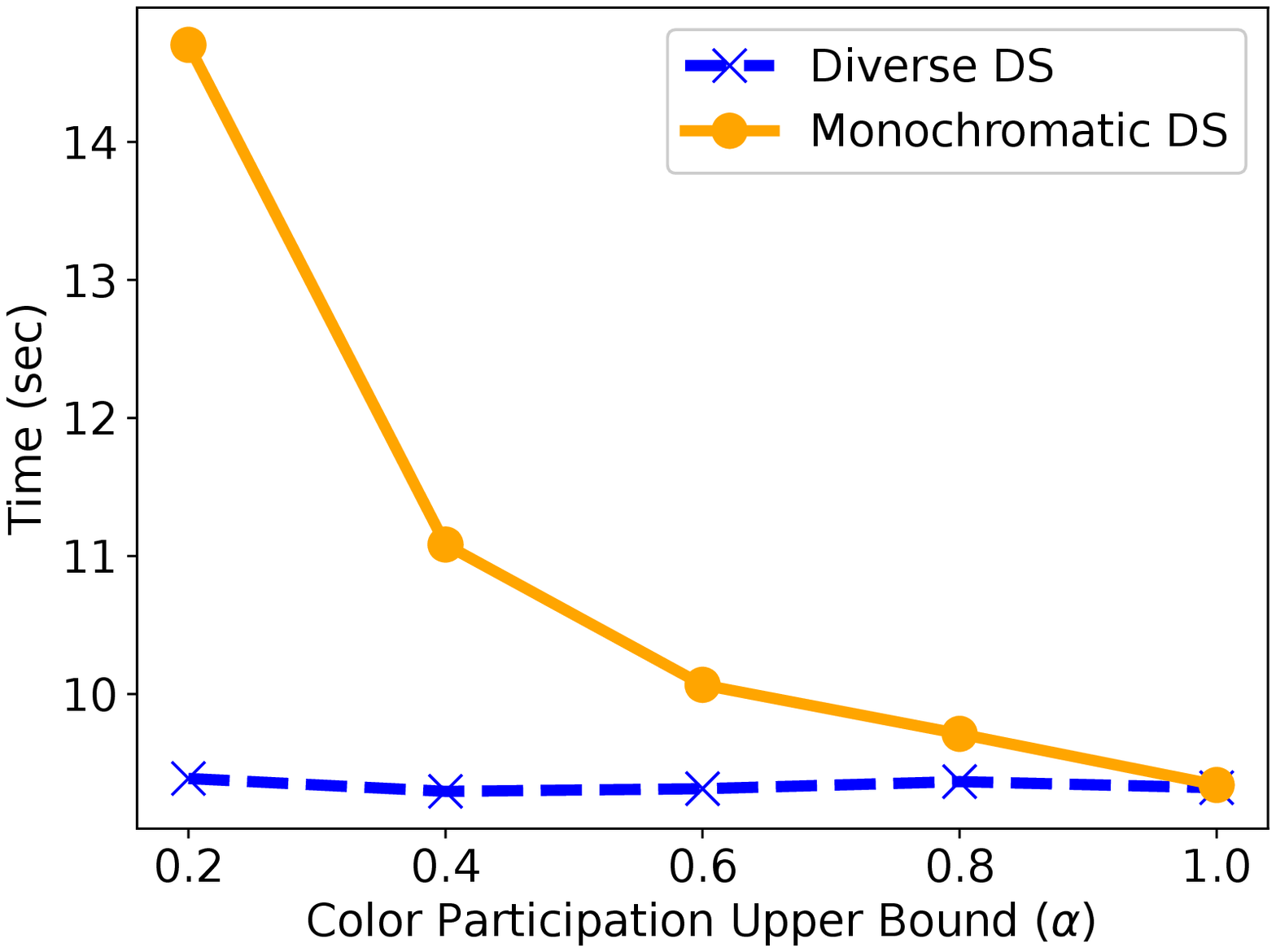}
    \caption{Left: Runtime as a function of the number of edges for all algorithms. Right: Runtime of Algorithm 2 as a function of the diversity parameter $\alpha$.}
    \label{fig:runtimes}
\end{figure}

\subsection{Evaluation of Algorithm~\ref{alg:dalkks_faster} (for Problem~\ref{prob:dalkks})}
We empirically study the accuracy and efficiency of Algorithm~\ref{alg:dalkks_faster} to understand its practical performance. We apply our algorithm and the baseline \textit{Prop2} on a part of the Amazon Product dataset, consisting of the graphs with at most 1,000 nodes and 5,000 edges. For each color $c\in C$, we set the lower bound as $k_c=|V_c|/2$. 
We also calculate the optimal solutions using \textit{IP} and report the empirical approximation ratios of Algorithm~\ref{alg:dalkks_faster} and \textit{Prop2}.

The results are shown in Figure~\ref{fig:eval_alg6}. In Figure~\ref{fig:eval_alg6} (left) we observe that in most graphs the solutions returned by Algorithm~\ref{alg:dalkks_faster} are optimal or near-optimal, with empirical approximation ratios consistently higher than $0.95$. This means that our proposed algorithm is of high accuracy in practice, despite the theoretical approximation guarantee being $1/3$. The baseline \textit{Prop2} returns solutions with empirical approximation ratios greater than $0.8$ but the performance is worse than that of ours. 
In Figure~\ref{fig:eval_alg6} (right) we show the running times of all algorithms applied. As expected, \textit{Prop2} is the fastest as it solves only one LP.  Algorithm~\ref{alg:dalkks_faster} can scale to graphs in two colors with thousands of nodes, and it is faster than \textit{IP} in general. 
Although from the figure the scalability of Algorithm~\ref{alg:dalkks_faster} looks not much better than that of \textit{IP}, 
we wish to remark that there are 7 graphs for which Algorithm 6 runs more than 10 times faster than \textit{IP}. 
In the extreme example, where $n = 30$ and $m = 381$, 
Algorithm~\ref{alg:dalkks_faster} runs in 0.58s, while \textit{IP} consumes 242.2s, 
meaning that Algorithm~\ref{alg:dalkks_faster} runs more than 400 times faster than \textit{IP}. 
We further discuss the scalability of Algorithm~\ref{alg:dalkks_faster} particularly with respect to $|C|$ in Appendix~\ref{appendix:synthetic}. 

\begin{figure}
    \centering
    \includegraphics[width=0.235\textwidth]{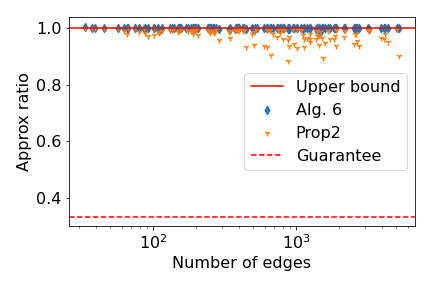}
    \includegraphics[width=0.235\textwidth]{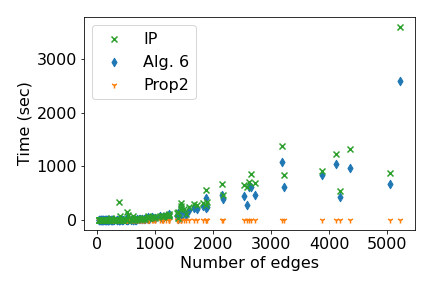} 
    \caption{Empirical approximation ratios and running times of Algorithm~\ref{alg:dalkks_faster} and baselines on the Amazon Product dataset with at most 1,000 nodes and 5,000 edges.} 
    \label{fig:eval_alg6}
\end{figure}

\section{Conclusions}

In this work, we have focused on the problem of finding a densest diverse subgraph. 
We proposed novel formulations and approximation algorithms for two different notions of diversity. 
We performed various experiments on synthetic and real-world datasets, verifying that the densest subgraphs tend to be driven by homophily and that our tools provide the state-of-the-art methods. 

Our work makes significant progress towards DSD with diversity and opens up several interesting problems. 
Can we improve the $\Omega(1/\sqrt{n})$-approximation for Problem~\ref{prob:fair} (in the case of $V$ being diverse)?  Can we design a better algorithm for Problem~\ref{prob:dalkks}, in terms of both the approximation ratio and the runtime? 
Investigating the hardness of approximation is also an interesting direction.

\bibliographystyle{ACM-Reference-Format}
\balance
\bibliography{ref}

\clearpage
\appendix

\section{Proof of Lemma~\ref{lem:dalkks_key}}\label{appendix:fair_third}
First we consider the case where $y_c\geq 1$ for every $c\in C$. 
Let $a$ be the maximum number that satisfies $C_\mathrm{sat}(S(a))=C$. 
Let $b$ be the infimum of the numbers $b$ that satisfy $C_\mathrm{sat}(S(b))=\emptyset$. 
Let $y^*_\mathrm{max}= \max_{v\in V}y^*_v$. 
Note that $0\leq a\leq b\leq y^*_\mathrm{max}$ hold. 

To prove the lemma, it suffices to show that (at least) one of the following cases occurs: 
\begin{description}
    \item[Case (i)] There exists some $r'\leq a$ such that $d(S(r'))\geq \frac{\lambda}{3}$;
    \item[Case (ii)] There exists some $a < r'\leq b$ such that \\ $\displaystyle \frac{|E(S(r'))|}{\sum_{c\in C_\mathrm{sat}(S(r'))}|S(r')_c|+\sum_{c\in C\setminus C_\mathrm{sat}(S(r'))}k_c}\geq \lambda/3$; 
    \item[Case (iii)] There exists some $r'> b$ such that $|E(S(r'))|\geq \frac{|E(S^*)|}{3}$. 
\end{description}

To show that, suppose that none of the above cases occurs. 
We define indicator functions $Z_v(r)\colon [0,y^*_\mathrm{max}]\rightarrow \{0,1\}$ for $v\in V$ and $Z_e(r)\colon [0,y^*_\mathrm{max}]\rightarrow \{0,1\}$ for $e=\{u,v\}\in E$ as follows: 
\begin{align*}
Z_v(r)=
\begin{cases}
1 &\text{if } r\leq y^*_v,\\
0 &\text{otherwise},
\end{cases}
\quad
Z_e(r)=
\begin{cases}
1 &\text{if } r\leq \min\{y^*_u,y^*_v\},\\
0 &\text{otherwise}.
\end{cases}
\end{align*}
Since Case (i) does not hold, for any $r'\leq a$, we have $d(S(r'))=\frac{|E(S(r'))|}{|S(r')|}< \frac{\lambda}{3}$. Thus, we have 
\begin{align*}
\int_0^a |E(S(r))|\rd r
&<\frac{\lambda}{3}\int_0^a |S(r)|\rd r \\
&=\frac{\lambda}{3}\int_0^a \sum_{v\in V}Z_v(r)\rd r
=\frac{\lambda}{3}\sum_{v\in V} \int_0^a Z_v(r)\rd r. 
\end{align*}
Since Case (ii) does not occur, for any $a < r' \leq b$, we have 
$|E(S(r'))|< \frac{\lambda}{3} \left(\sum_{c\in C_\mathrm{sat}(S(r'))}|S(r')_c|+\sum_{c\in C\setminus C_\mathrm{sat}(S(r'))}k_c\right)$. Hence, we have 
\begin{align*}
\int_a^b|E(S(r))|\rd r 
<\frac{\lambda}{3}\int_a^b\left(\sum_{c\in C_\mathrm{sat}(S(r))}|S(r)_c|+\sum_{c\in C\setminus C_\mathrm{sat}(S(r))}k_c\right)\rd r. 
\end{align*}
Now we see that 
\begin{align*}
&\int_a^b \sum_{c\in C_\mathrm{sat}(S(r))}|S(r)_c|\rd r
\leq \int_a^b \sum_{c\in C}|S(r)_c|\rd r \\
&= \int_a^b \sum_{c\in C}\sum_{v\in V_c}Z_v(r)\rd r 
= \sum_{c\in C}\sum_{v\in V_c} \int_a^b Z_v(r)\rd r\\
&= \sum_{v\in V} \int_a^b Z_v(r)\rd r
\end{align*}
and
\begin{align*}
\int_a^b \sum_{c\in C\setminus C_\mathrm{sat}(S(r))}k_c \rd r
\leq \int_a^b \sum_{c\in C}p^*_c \rd r
\leq \|\bm{p}^*\|_1\cdot y^*_\mathrm{max}\leq 1. 
\end{align*}
Therefore, we have 
\begin{align*}
\int_a^b|E(S(r))|\rd r < \frac{\lambda}{3}\left(\sum_{v\in V} \int_a^b Z_v(r)\rd r + 1\right). 
\end{align*}
Since Case (iii) does not hold, for any $r'> b$, we have $|E(S(r'))|< \frac{|E(S^*)|}{3}$. Thus, we have 
\begin{align*}
\int_b^{y^*_\mathrm{max}} |E(S(r))|\rd r
&< \frac{|E(S^*)|}{3} \int_b^{y^*_\mathrm{max}}\rd r\\
&\leq \frac{|E(S^*)|}{3}y^*_\mathrm{max} 
\leq \frac{1}{3}\frac{|E(S^*)|}{\|\bm{p}^*\|_1}\leq \frac{\lambda}{3}, 
\end{align*}
where the last inequality follows from Lemma~\ref{lem:LP_LB}. 
Thus, we have 
\begin{align*}
&\int_0^{y^*_\mathrm{max}}|E(S(r))|\rd r \\
&=\int_0^{a}|E(S(r))|\rd r 
+\int_a^{b}|E(S(r))|\rd r 
+\int_b^{y^*_\mathrm{max}}|E(S(r))|\rd r\\
&< \frac{\lambda}{3}\left(\sum_{v\in V} \int_0^a Z_v(r)\rd r 
+ \left(\sum_{v\in V} \int_a^b Z_v(r)\rd r + 1\right)
+ 1 \right) \\
&\leq \frac{\lambda}{3}\left(\sum_{v\in V}y^*_v + 2\right)
= \frac{\lambda}{3}\left(\sum_{c\in C}\sum_{v\in V_c}y^*_v + 2\right) = \lambda.
\end{align*}

On the other hand, by a simple calculation, we have 
\begin{align*}
&\int_0^{y^*_\mathrm{max}}|E(S(r))|\rd r
=\int_0^{y^*_\mathrm{max}}\sum_{e\in E}Z_e(r)\rd r\\
&=\sum_{e\in E} \int_0^{y^*_\mathrm{max}} Z_e(r)\rd r
=\sum_{e\in E} \min\{y^*_u,y^*_v\}
\geq \sum_{e\in E} x^*_e
=\lambda, 
\end{align*}
which contradicts to the above inequality. 

Next we consider the case where $y_c=0$ for some $c\in C$. 
Defining $a$, $b$, and $y^*_\mathrm{max}$ in the same way, we have $0\leq a \leq y^*_\mathrm{max}<b=\infty$. 
It suffices to show that (at least) one of the following cases occurs: 
\begin{description}
    \item[Case (i)] There exists some $r'\leq a$ such that $d(S(r'))\geq \frac{\lambda}{3}$;
    \item[Case (ii)] There exists some $a < r'\leq y^*_\mathrm{max}$ such that \\ $\displaystyle \frac{|E(S(r'))|}{\sum_{c\in C_\mathrm{sat}(S(r'))}|S(r')_c|+\sum_{c\in C\setminus C_\mathrm{sat}(S(r'))}k_c}\geq \lambda/3$. 
\end{description}
The proof is similar to the above and thus omitted. 
\qed

\section{Supplementary Material for Experiments}

\subsection{DBLP Co-Authorship Dataset}\label{appendix:dblp}
We create this dataset from all authors that have published at least 3 papers in the following conferences between 2003 and 2022: 
\begin{itemize}
\leftskip=-10pt
    \item \textbf{Theory: } COLT, FOCS, ICALP, SODA, STOC     
    \item \textbf{Data Management: } CIDR, ICDE, PODS, SIGMOD, VLDB
    \item \textbf{Data Mining: } CIKM, ICDM, KDD, WSDM, WWW   
    \item \textbf{Learning: } AAAI, ICLR, ICML, NeurIPS
    \item \textbf{Networking: } ICC, IMC, INFOCOM, MOBICOM, SIGCOMM
    \item \textbf{Image \& Video Processing: } CVPR, ECCV, ICCV
\end{itemize}

\subsection{\textit{IP} for Problem~\ref{prob:dalkks}}\label{appendix:IP}
Let us introduce a $0$--$1$ variable $x_e$ for each $e\in E$ and $0$--$1$ variable $y_v$ for each $v\in V$. Then for each $k_\text{guess}\in \{\|\bm{k}\|_1,\dots,  n\}$, we construct the following (mixed) integer linear programming problem: 
\begin{alignat*}{3}
\text{maximize}&&\quad & \sum_{e\in E} x_e/k_\text{guess}&\\
\text{subject to}&&      &\sum_{v \in V} y_v  = k_\text{guess},    \\
           &      &      &   \sum_{v \in V_c} y_v  \geq k_c  &\quad & \forall c\in C, \\
           &      &      &x_e\leq y_u,\ x_e\leq y_v && \forall e=\{u,v\}\in E, \\
           &      &      &0\leq x_e \leq 1 && \forall e\in E, \\
           &      &      &y_v\in \{0,1\} && \forall v\in V.
\end{alignat*}
It is easy to see that this problem is equivalent to Problem~\ref{prob:dalkks} with size constraint $|S|=k_\text{guess}$. Note that the $0$--$1$ constraints for $x_e$'s are relaxed because they are redundant. 

The baseline \textit{IP} computes an optimal solution to Problem~\ref{prob:dalkks} as follows: 
Solve the above integer programming problem for all $k_\text{guess}\in \{\|\bm{k}\|_1,\dots,  n\}$ and return a solution with the maximum density value.

\subsection{Scalability of Algorithm~\ref{alg:dalkks_faster} on Synthetic Data}\label{appendix:synthetic}
We run Algorithm~\ref{alg:dalkks_faster} on a set of Erd{\H{o}}s-R{\'e}nyi graphs with size $n$ ranged in $\{18,54,90,126\}$, edge probability $p=5/n$, and the number of colors $|C|$ ranged in $\{2,3,6\}$. Nodes are evenly split into colors, and the lower bound is set as $k_c=\left \lfloor\frac{n}{2|C|}\right \rfloor$ for each $c\in C$. Random graphs are sampled ten times for each setting, and the averaged running times are reported in Figure~\ref{fig:er_alg6}. When we have only two colors, Algorithm~\ref{alg:dalkks_faster} can scale up to a thousand nodes as shown in the main text. However, the algorithm becomes rather inefficient as the graph size goes beyond that or the number of colors increases. 
\begin{figure}[ht]
    \centering
    \includegraphics[width=.3\textwidth]{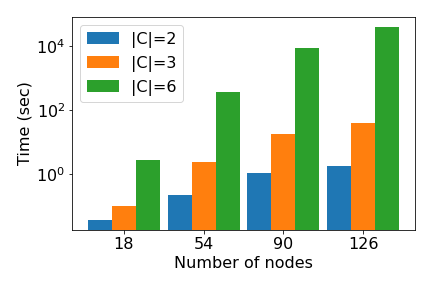}
    \caption{Running time of Algorithm~\ref{alg:dalkks_faster}.}
    \label{fig:er_alg6}
\end{figure}

\end{document}